\documentclass[letterpaper, 9 pt, conference]{ieeeconf}      

\IEEEoverridecommandlockouts                              
\pdfminorversion=4                                                          

\usepackage{graphics} 
\usepackage{epsfig} 
\usepackage{times} 
\usepackage{amsmath} 
\usepackage{amssymb}  
\usepackage{subfigure}
\usepackage{url}
\usepackage{dsfont}
\usepackage{mathrsfs}
\usepackage{bbm}
\newtheorem{theorem}{Theorem}

\usepackage{cite}
\newtheorem{assumption}[theorem]{Assumption}
\usepackage{color}
\usepackage{algorithm}

\newtheorem{corollary}[theorem]{Corollary}

\newtheorem{definition}[theorem]{Definition}

\newtheorem{lemma}[theorem]{Lemma}

{ 
\newtheorem{remark}[theorem]{Remark}

}

\newcommand{\bi}{\begin{itemize}}
\newcommand{\ei}{\end{itemize}}
\newcommand{\bd}{\begin{displaymath}}
\newcommand{\ed}{\end{displaymath}}
\newcommand{\be}{\begin{eqnarray*}}
\newcommand{\ee}{\end{eqnarray*}}

\usepackage[normalem]{ulem}

\setlength{\parindent}{0em}
\title{\LARGE \bf
Data-driven Stabilization of Discrete-time Control-affine Nonlinear Systems: A Koopman Operator Approach}
\author{Subhrajit Sinha, Sai Pushpak Nandanoori, Jan Drgona,  Draguna Vrabie\\
\thanks{This research was supported by the Assistant Secretary for Energy Efficiency and Renewable Energy, Office of Building Technologies of the U.S. Department of Energy, under Contracts No. DE-AC05-76RL01830.
\newline
All the authors are with Pacific Northwest National Laboratory, Richland, WA, USA - 99354.
Emails: \{subhrajit.sinha, saipushpak.n, jan.drgona, draguna.vrabie\}@pnnl.gov 
}
}



\begin{document}
\maketitle

\begin{abstract}
In recent years data-driven analysis of dynamical systems has attracted a lot of attention and transfer operator techniques, namely, Perron-Frobenius and Koopman operators are being used almost ubiquitously. Since data is always obtained in discrete-time, in this paper, we propose a purely data-driven approach for the design of a stabilizing feedback control law for a general class of discrete-time control-affine non-linear systems. In particular, we use the Koopman operator to lift a control-affine system to a higher-dimensional space, where the control system's evolution is bilinear. We analyze the controllability of the lifted bilinear system and relate it to the controllability of the underlying non-linear system. We then leverage the concept of Control Lyapunov Function (CLF) to design a state feedback law that stabilizes the origin. Furthermore, we demonstrate the efficacy of the proposed method to stabilize the origin of the Van der Pol oscillator and the chaotic Henon map from the time-series data.
\end{abstract}

\section{Introduction}
The design of stabilizing control law for general nonlinear systems is a central problem in control theory with applications in many different branches of engineering like power systems, biological networks, building systems, etc. To this end, the Sum of Squares (SOS) approach \cite{SOS, parrilo_SOS} and differential geometry-based feedback linearization \cite{sontag_feedback} techniques provide systematic approaches for the design of stabilizing feedback control laws for general nonlinear systems. Another approach for the design of stabilizing control laws stems from the ergodic theory of dynamical systems \cite{Lasota}. One of the main advantages of using ergodic and operator theoretic ideas is that these expositions generate \emph{exact linear representations} of the underlying nonlinear systems. Thus, one can leverage concepts from linear control theory for the analysis and control of nonlinear systems. Furthermore, they provide an opportunity for data-driven analysis and control of dynamical systems.

Motivated by these advantages there has been growing interest in transfer operator theoretic techniques, namely Perron-Frobenius and Koopman operator techniques, for analysis of dynamical systems \cite{Mezic2000,Vaidya_TAC, EDMD_williams,sinha_sparse_koopman_acc,mezic_koopmanism,sinha_equivariant_IFAC,sai_global_koopman_acc,nandanoori2021data,sinha_few_shot_arxiv,sinha_IT_CDC_2016,sinha_IT_ICC,sinha_IT_power_CDC,sinha_IT_power_journal}. Building on initial works of data-driven learning of dynamical systems using Koopman operators, in 
\cite{robust_DMD_journal,subspace_koompan} the authors provided algorithms for computation of Robust Koopman operator from noisy data. Moreover, \cite{lusch2018deep, yeung2019learning, nandanoori2022graph} used deep learning techniques for learning both the observable functions and the Koopman operator from time-series data.

In the application front, \cite{surana_cdc_2016} used Koopman operators to design observers for general nonlinear systems, for control of non-equilibrium dynamics \cite{optimal_placement_JMAA,sinha_IT_optimal_placement_ICC}, causal analysis and topology identification in dynamical networks \cite{sinha_IT_data_acc,sinha_IT_data_journal}, analysis of power networks \cite{susuki2013nonlinear, sinha_computationally_efficient,sinha_online_PES}, for attack detection in power networks \cite{nandanoori2020model}, learning and analysis of biological systems \cite{sinha_genetic} etc. 

For dynamical systems with control, in \cite{kaiser_control,proctor_control}, the authors proposed a method for computation of Koopman operators in the presence of control inputs. \cite{korda_mezic_predictor} discussed how Koopman operators could be used for model predictive control. In \cite{bowen_feedback, bowen_optimal, bowen_convex} the Koopman operator framework was used for designing stabilizing and optimal control laws for a class of continuous-time dynamical systems. However, in all practical applications, data is always obtained as a discrete set, and thus it is natural to design stabilizing feedback control law from a discrete-time systems point of view. 

In this paper, we address the problem of designing stabilizing state feedback control law for discrete-time control-affine systems. In particular, we use the Koopman operator framework to lift the control-affine system to a higher dimensional space where the Koopman representation of the underlying control-affine system is bilinear. We first analyze the global controllability of the lifted bilinear system and relate the controllability of the control-affine system with those of the lifted bilinear system. Moreover, we design a stabilizing state feedback control law for this lifted bilinear control system and show that the control law quadratically stabilizes the origin of the lifted bilinear system. Subsequently, the closed-loop trajectories in the lifted space are then projected back to the state space so that the original system's origin is stabilized.

The rest of the paper is organized as follows. In section \ref{section_transfer_operators} we discuss the basics of transfer operator theory. The paper's main results are presented in section \ref{section_nonlinear}, where we analyze the controllability of the nonlinear system and relate it to the controllability of the Koopman lifted system and propose an optimization problem for the design of the stabilizing feedback control law. Next in section \ref{section_DMD} we briefly discuss the data-driven computation of the Koopman lifted system followed by simulation results in section \ref{section_simulation}, where we demonstrate the efficacy of the proposed approach on two typical nonlinear systems. Finally we conclude the paper in section \ref{section_conclusion}.

\section{Preliminaries}\label{section_transfer_operators}
In this section, we present some preliminaries on transfer operators for a discrete-time dynamical system. 
Consider a discrete-time dynamical system
\begin{eqnarray}\label{system}
z_{t+1} = T(z_t)
\end{eqnarray}
where $T:Z\subset \mathbb{R}^N \to Z$ is assumed to be of class at least ${\cal C}^2$. The dynamical system (\ref{system}) can also be formally defined as a tuple $(Z,{\cal B}(Z), \mu , T)$, where ${\cal B}(Z)$ is the Borel $\sigma$-algebra on $Z$, $\mu$ is a finite measure on the $\sigma$-algebra and $T:Z \rightarrow Z$ is a $\cal B$-measurable map which governs the evolution of the states $z\in Z$. With this, associated with the dynamical system $(Z,{\cal B}(Z), \mu , T)$, one can define two operators, namely, the Perron-Frobenius (P-F) operator and the Koopman operator which governs the evolution of measures \footnote{With a slight abuse of notation we consider the P-F operator to propagate measures. See \cite{Vaidya_TAC}} and functions, under the map $T$, respectively \cite{Lasota}.

\begin{definition}[Perron-Frobenius Operator \cite{Lasota}]
Let $(Z,{\cal B}(Z), \mu , T)$ be a discrete-time dynamical system and let ${\cal M}(Z)$ be the vector-space of signed measures on $Z$. Then the Perron-Frobenius operator $\mathbb{P}:{\cal M}(Z)\to {\cal M}(Z)$ is given by
\begin{eqnarray*}
[\mathbb{P}\mu](A)=\int_{{\cal Z} }\delta_{T(z)}(A)d\mu(z)=\mu(T^{-1}(A)),
\end{eqnarray*}
where $\delta_{T(z)}(A)$ is stochastic transition function which measure the probability that a point $z$ will reach the set $A$ in one time step under the system mapping $T$. 
\end{definition}

As mentioned earlier, there is another operator, namely, the Koopman operator, which governs the evolution of functions under the map $T$ and it is defined as follows:
\begin{definition} [Koopman Operator \cite{Lasota}] 
Given any $h\in\cal{F}$, the Koopman operator $\mathbb{U}:{\cal F}\to {\cal F}$ is defined as
\[[\mathbb{U} h](z)=h(T(z)),\]
where $\cal F$ is the space of functions (observables) invariant under the action of the Koopman operator $\mathbb{U}$.
\end{definition}

\begin{figure}
\centering
\includegraphics[scale=.275]{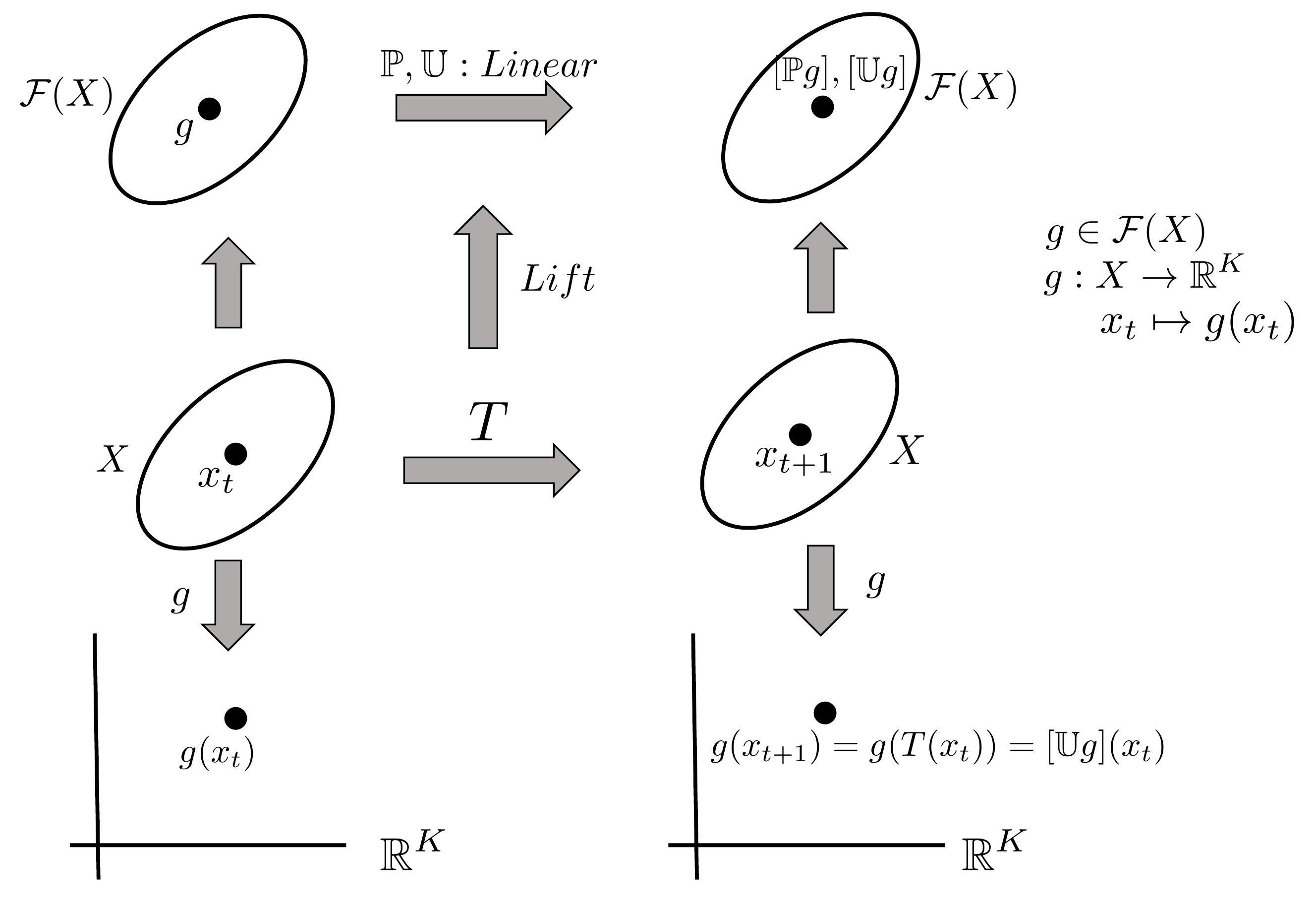}
\caption{Koopman and Perron-Frobenius operators.}\label{koopman_diagram}
\end{figure}

It can be shown that the P-F and the Koopman operators are dual to each other \cite{Lasota} in the sense that 
\[\langle \mathbb{U}f,g\rangle = \langle f,\mathbb{P}g\rangle,\]
for $f\in L_{\infty}(Z)$ and $g\in L_1 (Z)$.

The advantage of using the above operator theoretic approach is the fact that both the P-F operator and the Koopman operator are linear operators, even if the underlying system is non-linear. However, though the operators are linear, the trade-off is that these operators are typically infinite-dimensional. In particular, the P-F operator and the Koopman operator often lift a nonlinear dynamical system from a finite-dimensional space to generate an infinite-dimensional linear system in infinite dimensions. 

\begin{definition}[Koopman Eigenfunction (KEF)] An eigenfunction of the Koopman operator $\mathbb{U}$ is a non-zero observable $\phi_{\lambda} \in {\cal F}$ that satisfies
\begin{align*}
{\mathbb{U}} \phi_{\lambda} = \lambda \phi_{\lambda},
\end{align*}
where $\lambda \in \mathbb{C}$ is the Koopman eigenvalue (KE) corresponding to the KEF $\phi_{\lambda}$. 
\end{definition}



\section{Nonlinear Stabilization using Koopman Operator}\label{section_nonlinear}
In this section, we present the main theoretical results of the paper where we formulate an optimization problem for the design of state feedback stabilizing control law for a discrete-time control-affine system. For simplicity of presentation, we consider the case of a single input control system.

\subsection{Koopman Representation of the Nonlinear Control System}
We consider a discrete-time control-affine system of the form 
\begin{equation}\label{affine_sys}
x_{t+1} = T(x_t) + g(x_t)u_t
\end{equation}
where $x_t\in\mathbb{R}^d$, $u_t\in\mathbb{R}$ is the single input to the system and $T,g:\mathbb{R}^d\to\mathbb{R}^d$ are at least of class ${\cal C}^2$. Let $(\lambda_i,\phi_i)$, $i=1,2,\dots$ be the discrete eigenvalues and the corresponding eigenfunctions of the Koopman operator $\mathbb{U}$ for the unactuated system $x_{t+1} = T(x_t)$.

\begin{assumption}\label{assumption_span}
Let ${\cal F}^n = \textnormal{span}\{\phi_i\}_{i=1}^n$ be the span of a finite subset of the Koopman eigenfunctions such that $x \in {\cal F}^n$. Hence we have
\begin{eqnarray}\label{assumption_1_eq}
x = \sum_{i=1}^n \phi_i(x){\bf v}_i^x,
\end{eqnarray}
where ${\bf v}_i^x\in\mathbb{C}^d$ are the Koopman modes.
\end{assumption}

We follow the notation of \cite{surana_cdc_2016} and order the Koopman eigenfunctions $\{\phi_1, \cdots, \phi_n\}$ and the corresponding Koopman eigenvalues and Koopman modes such that the complex conjugate pairs are placed adjacent to each other. Let $\Phi(x) = \{\hat{\phi}_1(x),\cdots , \hat{\phi}_n(x)\}$, such that
\begin{enumerate}
\item{$\hat{\phi}(x) = \phi_i(x)$ if the $i^{th}$ Koopman eigenfunction is real, and}
\item{$\hat{\phi}_i(x) = 2\textnormal{Re}(\phi_i)$ and $\hat{\phi}_{i+1}(x) = -2\textnormal{Im}(\phi_i)$, if $i^{th}$ and $(i+1)^{th}$ Koopman eigenfunctions are complex conjugate pairs.}
\end{enumerate}
where $\textnormal{Re}(\cdot)$, $\textnormal{Im}(\cdot)$ denotes the real and imaginary part respectively. The Koopman mode decomposition (\ref{assumption_1_eq}) can now be written as
\begin{align}
x = C^x\Phi(x),
\label{proj_maps}
\end{align}
where $C^x\in\mathbb{C}^{d\times n}$. Note that, since $x = C^x\Phi(x)$, the mapping $\Phi$ is injective onto its range and $\Phi(0)=0$. Moreover, the matrix $C^x$, which can be obtained from data as a solution to a least squares problem, is used to project the lifted trajectories $(z_t\in\mathbb{R}^n)$ back to the state space $(x_t\in\mathbb{R}^d)$.

We further assume the following.
\begin{assumption}\label{assumption_1}
We assume that $\frac{\partial {\bf \Psi}}{\partial x}g$ lies in ${\cal F}^n$, so that there exists a constant matrix $B\in\mathbb{R}^{n\times n}$ such that $\frac{\partial {\bf \Psi}}{\partial x}g=B{\bf \Psi}$.
\end{assumption}

\begin{remark}
On the face of it, assumption \ref{assumption_1} seems a bit restrictive and depends on the the functions $\hat{\phi}_i$ and $g$. In the cases where assumption \ref{assumption_1} fails to hold, one can obtain the matrix $B$ as a solution of a least-squares problem where we have $\frac{\partial {\bf \Psi}}{\partial x}g = B\Phi + \varepsilon$, for $\varepsilon \in\mathbb{R}^n$, so that we minimize the norm of the error vector $\varepsilon$ to find the optimal $B$.
\end{remark}


\begin{lemma}\label{lemma_koopman_representation}
Under assumption \ref{assumption_1} and the set of observables (lifting functions) defined as $\Phi(x_t)=z_t$, the Koopman representation of the control-affine system (\ref{affine_sys}) is given by
\begin{eqnarray}\label{Koopman_representation}
z_{t+1} = \mathbb{U}z_t + u_t B z_t.
\end{eqnarray}
\end{lemma}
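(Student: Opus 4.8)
The plan is to start from the definition $z_{t+1} = \Phi(x_{t+1})$ and substitute the control-affine dynamics (\ref{affine_sys}), so that $z_{t+1} = \Phi\big(T(x_t) + g(x_t)u_t\big)$. I would then separate the drift contribution (the part that survives when $u_t=0$) from the control contribution, treating the latter as a first-order term in $u_t$. The drift part should produce the linear Koopman term $\mathbb{U}z_t$, while the control part, after invoking Assumption \ref{assumption_1}, should produce the bilinear term $u_t B z_t$.

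For the drift term I would use the defining property of the Koopman eigenfunctions of the unactuated system $x_{t+1}=T(x_t)$, namely $\phi_i\circ T = \lambda_i\phi_i$. The only care needed is to pass from the complex eigenfunctions $\phi_i$ to the real observables $\hat\phi_i$ collected in $\Phi$. Writing $a_i=\textnormal{Re}(\lambda_i)$, $b_i=\textnormal{Im}(\lambda_i)$ for a complex-conjugate pair and using the convention $\hat\phi_i = 2\textnormal{Re}(\phi_i)$, $\hat\phi_{i+1} = -2\textnormal{Im}(\phi_i)$, a direct computation yields
\[
\hat\phi_i\circ T = a_i\hat\phi_i + b_i\hat\phi_{i+1},\qquad
\hat\phi_{i+1}\circ T = -b_i\hat\phi_i + a_i\hat\phi_{i+1},
\]
together with $\hat\phi_j\circ T = \lambda_j\hat\phi_j$ for a real eigenvalue. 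Hence $\Phi\circ T = \mathbb{U}\Phi$, where $\mathbb{U}$ is a real block-diagonal matrix (scalar blocks $\lambda_j$ and $2\times 2$ rotation-scaling blocks), so that $\Phi(T(x_t)) = \mathbb{U}z_t$ holds exactly.

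For the control term I would Taylor-expand the lifting map in the control about $u_t=0$, i.e. around the unactuated successor state $T(x_t)$,
\[
\Phi\big(T(x_t)+g(x_t)u_t\big)
= \Phi(T(x_t)) + \left.\frac{\partial\Phi}{\partial x}\right|_{T(x_t)} g(x_t)\,u_t + O(u_t^2).
\]
The leading term is $\mathbb{U}z_t$ by the previous step. For the first-order term, Assumption \ref{assumption_1} states precisely that $\frac{\partial\Phi}{\partial x}g$ lies in $\mathcal{F}^n$ and equals $B\Phi$ for a constant matrix $B$; read as a functional identity and evaluated consistently with the expansion, this term contributes $B\Phi(x_t)\,u_t = u_t B z_t$. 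Collecting the two pieces gives the claimed form $z_{t+1} = \mathbb{U}z_t + u_t B z_t$.

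The main obstacle is the control term, and it is twofold. First, the expansion places the Jacobian $\partial\Phi/\partial x$ at the drift image $T(x_t)$, whereas Assumption \ref{assumption_1} is an identity in the base variable, so one must apply the assumption at the correct argument rather than conflate the two evaluation points. Second, the $O(u_t^2)$ remainder is generically nonzero, since $\mathcal{F}^n$ need not remain invariant under the full actuated map $x\mapsto T(x)+g(x)u$ for $u\neq 0$; consequently (\ref{Koopman_representation}) is exact to first order in the control, with the bilinear form serving as the corresponding finite-dimensional Koopman model. Pinning down that this first-order truncation is the intended representation, instead of pursuing an exact closure, is where the argument must be made precise.
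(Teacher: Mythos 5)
Your proof follows essentially the same route as the paper's: substitute the dynamics into $\Phi$, Taylor-expand about the drift image $T(x_t)$ so that the zeroth-order term gives $\mathbb{U}z_t$ via the eigenfunction property and the first-order term gives $\frac{\partial\Phi}{\partial x}g(x_t)u_t$, then invoke Assumption~\ref{assumption_1} to replace the latter by $u_t B z_t$. The two caveats you raise --- the evaluation point of the Jacobian versus the argument in Assumption~\ref{assumption_1}, and the generically nonzero $O(u_t^2)$ remainder --- are genuine, but the paper's own proof simply absorbs them into an ``$\approx$'' and ``h.o.t.'', so your treatment is if anything the more careful of the two.
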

\begin{proof}
Let $z(t)=\Phi(x(t))$. Then we have,
\begin{eqnarray}\label{Koopman_system}\nonumber
z_{t+1} &=& \Phi(x_{t+1}) = \Phi\Big( T(x_t) + g(x_t)u_t\Big)\\ \nonumber
&\approx& \Phi\big(T(x_t)\Big) + \frac{\partial \Phi}{\partial x} g(x_t)u_t + h.o.t.\\ \nonumber
&=& \mathbb{U}\Phi(x_t) + \frac{\partial \Phi}{\partial x} g(x_t)u_t\\
&=& \mathbb{U}z_t + \frac{\partial \Phi}{\partial x} g(x_t)u_t
\end{eqnarray}

Now from assumption \ref{assumption_1}, $\frac{\partial \Phi}{\partial x} g(x_t) = B\Phi(x_t)= Bz_t$ and hence the Koopman representation (\ref{Koopman_system}) of the nonlinear system (\ref{affine_sys}) becomes
\begin{eqnarray}\label{Koopman_representation}
z_{t+1} = \mathbb{U}z_t + u_t B z_t.
\end{eqnarray}

Hence, for the control-affine system (\ref{affine_sys}), the equivalent Koopman representation is a bilinear control system of the form (\ref{Koopman_representation}).
\end{proof}

Note that under assumptions \ref{assumption_span} and \ref{assumption_1}, the Koopman representation (\ref{Koopman_representation}) is a finite-dimensional exact representation of the control-affine system (\ref{affine_sys}). See \cite{surana_cdc_2016} for the continuous-time counterpart of this.


\subsection{Controllability of the original and Koopman lifted system}

The concepts of controllability and observability are fundamental notions in systems theory. But for general nonlinear systems, it is not always easy to determine whether it is controllable (observable). Moreover, in the case of nonlinear systems, there are multiple notions of controllability like accessibility, local controllability and global controllability \cite{bullo_book}. For the completeness of the paper, we revisit some of the basic definitions related to controllability of nonlinear systems.

\begin{definition}[Accessibility \cite{bullo_book}]
A nonlinear system is accessible from the state $x_0\in\mathbb{R}^d$ if the attainable set (both in forward and backward time) from $x_0$ has a non-empty interior. 
\end{definition}

Hence if a system is accessible from some $x_0$, it implies that starting from $x_0$ one can drive the system to some open subset of the configuration manifold. However, this does not imply controllability or even local controllability.

\begin{definition}[Local Controllability \cite{bullo_book}]
A nonlinear system is locally controllable from $x_0\in\mathbb{R}^d$ if the reachable set from $x_0$ contains a neighbourhood of $x_0$.
\end{definition}
In other words, local controllability implies that starting from $x_0$ the system can be driven to any point in the state space which is \emph{near} $x_0$. 

\begin{theorem}\label{theorem_local_controllability}
The control-affine system (\ref{affine_sys}) is locally controllable at $x_0$ if the Lie algebra of the vector field at $x_0$ span the tangent space at $x_0$, that is the rank of the Lie algebra generated by the vector field of (\ref{affine_sys}) at $x_0$ is $d$, where $d$ is the dimension of the configuration manifold.
\end{theorem}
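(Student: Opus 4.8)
The plan is to prove the statement by the standard geometric (Lie-theoretic) argument connecting the Lie algebra rank condition to the local structure of the reachable set, i.e.\ the orbit theorem of Sussmann and Nagano specialized to control-affine systems, which is the content of the result cited from \cite{bullo_book}. First I would associate to (\ref{affine_sys}) the family of vector fields $\mathcal{V} = \{f, g\}$, where $f$ plays the role of the drift and $g$ that of the control vector field, and form the Lie algebra $\mathcal{L}(\mathcal{V})$ generated by $\mathcal{V}$ under iterated Lie brackets. The hypothesis is precisely that the evaluation $\mathcal{L}(\mathcal{V})(x_0)$ of this Lie algebra at $x_0$ spans the tangent space $T_{x_0}\mathbb{R}^d$, i.e.\ has rank $d$.

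Next I would introduce the flows $\phi^t_f$ and $\phi^t_g$ of the individual vector fields and the pseudogroup of diffeomorphisms generated by composing them forward and backward in time, together with the orbit $\mathcal{O}_{x_0}$ of $x_0$ under this pseudogroup. The central technical step is the commutator-of-flows expansion: the composition $\phi^{-t}_g \circ \phi^{-t}_f \circ \phi^{t}_g \circ \phi^{t}_f(x_0) = x_0 + t^2\,[f,g](x_0) + \mathcal{O}(t^3)$ as $t\to 0$, which shows that by concatenating short flow segments one realizes infinitesimal motion in the bracket direction $[f,g]$. Iterating this construction over longer bracket words produces motion along every iterated bracket in $\mathcal{L}(\mathcal{V})$ up to higher-order error. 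I would then invoke the orbit theorem to conclude that $\mathcal{O}_{x_0}$ is an immersed submanifold whose tangent space contains $\mathcal{L}(\mathcal{V})$ at each point, so that the rank-$d$ hypothesis forces the orbit to be $d$-dimensional, hence open, around $x_0$. A key simplification available here is that the scalar input $u_t\in\mathbb{R}$ is unbounded, so the control vector field $g$ is effectively available in both directions $\pm g$ and can be made to dominate the drift by taking $|u_t|$ large on short intervals; this lets the concatenated control trajectories of (\ref{affine_sys}) reproduce the flow compositions above, so the reachable set of (\ref{affine_sys}) inherits the non-empty interior of the orbit and contains a neighbourhood of $x_0$.

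The main obstacle I anticipate is the drift term $f$. In full generality the commutator-of-flows construction yields \emph{accessibility} (a reachable set with non-empty interior) directly from the rank condition, but upgrading this to genuine local \emph{controllability} — reachability of an entire neighbourhood of $x_0$, including the ``backward'' directions — is not automatic in the presence of drift and normally requires either a driftless/symmetric structure (where Chow--Rashevskii applies) or a bracket-balancing condition in which the obstructing brackets are dominated by good brackets. I would handle this by exploiting the unboundedness of $u_t$ noted above, which makes the admissible velocity set $\{f(x)+u g(x):u\in\mathbb{R}\}$ an entire affine line and thereby restores enough symmetry to suppress the drift obstruction on a sufficiently small neighbourhood.

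A secondary technical point that I would flag explicitly is the smoothness hypothesis: the orbit theorem and, in particular, the identification of the orbit's tangent space with the full Lie algebra typically require $\mathcal{C}^\infty$ (and, for the cleanest statement, analytic) vector fields, whereas $T,g$ are only assumed $\mathcal{C}^2$, which limits the order of brackets one may legitimately take. I would therefore either strengthen the regularity assumption to $\mathcal{C}^\infty$/analytic or restrict the rank condition to brackets of order at most two, and state the theorem accordingly, since this is where the argument is most delicate.
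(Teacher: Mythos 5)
The paper does not actually prove this theorem: its ``proof'' is the single line ``See \cite{bullo_book,discrete_time_controllability}.'' So you are attempting something the authors deferred entirely to the literature, and your sketch is a reasonable account of the classical continuous-time Lie-algebraic story. Unfortunately it has two genuine gaps relative to the statement at hand.

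First, the system (\ref{affine_sys}) is \emph{discrete-time}: $x_{t+1}=T(x_t)+g(x_t)u_t$. Your entire argument --- flows $\phi^t_f$, $\phi^t_g$, the commutator expansion $\phi^{-t}_g\circ\phi^{-t}_f\circ\phi^{t}_g\circ\phi^{t}_f(x_0)=x_0+t^2[f,g](x_0)+\mathcal{O}(t^3)$, and the Sussmann--Nagano orbit theorem --- lives in continuous time and depends on letting the flow time $t\to 0$. A discrete-time map has no short-time flow to concatenate; one time step is one application of $T+gu$, and ``backward in time'' requires invertibility of that map. The correct discrete-time machinery (which is why the paper also cites \cite{discrete_time_controllability}) builds a Lie-algebra-like object from compositions of the one-step maps and their inverses (in the spirit of Jakubczyk--Sontag), and is not obtained by transplanting the continuous-time bracket calculus. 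As written, your proof establishes (at best) a statement about a different system.

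Second, your proposed cure for the drift obstruction does not work. You correctly observe that the Lie algebra rank condition alone yields only accessibility in the presence of drift, but then claim that unboundedness of the scalar input makes the admissible velocity set $\{f(x)+ug(x):u\in\mathbb{R}\}$ an affine line and thereby ``restores enough symmetry'' to get local controllability. The affine line is symmetric only in the $g$ direction; the component of the drift transverse to $g$ and its brackets is not reversible no matter how large $|u|$ is. The standard counterexample $\dot{x}_1=u$, $\dot{x}_2=x_1^2$ (scalar unbounded input, control-affine, rank condition satisfied) has $x_2$ nondecreasing along every trajectory, so it is accessible but not locally controllable. Upgrading accessibility to local controllability genuinely requires an additional hypothesis (e.g., Sussmann's bracket-neutralization conditions, or symmetry/driftlessness so that Chow--Rashevskii applies); it cannot be extracted from the rank condition plus unbounded inputs. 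Your final remark about the $\mathcal{C}^2$ regularity limiting admissible bracket orders is a fair and worthwhile observation, but it does not repair either of the two gaps above.
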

\begin{proof}
See \cite{bullo_book,discrete_time_controllability}.
\end{proof}

In the case of linear systems, local controllability implies that the system is globally controllable, that is, if the trajectories from $x_0$ can be steered to all points of the state space that are in the local neighbourhood of $x_0$, it implies that the system can be driven to any point in the state space from any other point. However, for general nonlinear systems, this is not the case and the notion of global controllability for nonlinear systems is defined as follows.

\begin{definition}[Global Controllability \cite{bullo_book}]\label{def_global_controllability}
If a nonlinear system is locally controllable for all $x_0$ of the configuration manifold, then the system is globally controllable.
\end{definition}
Note that the above definition does imply global controllability because a trajectory starting from any $x_0$ can be steered to any point in the state space by \emph{patching} together trajectories obtained from local controllability and this patching can be achieved because the nonlinear system is locally controllable for all points in the state space.

We refer the reader to \cite{bullo_book,discrete_time_controllability} for further details.

\subsubsection{Controllability of general bilinear systems} Consider a general bilinear system of the form 
\begin{eqnarray}\label{bilinear_general}
z(t+1) = Az(t) + \sum_{i = 1}^l u_i(t)B^iz(t) + B^0 u_0(t),
\end{eqnarray}
where $z(t)\in\mathbb{R}^n$ are the states, $u_i(t)$ are scalar inputs with input matrices $B^i$. For the bilinear system (\ref{bilinear_general}), the drift vector field is $f_d = A z$ and define $\theta_0(z)=B^0$ and $\theta_i(z) = B^iz$ for $i = 1,2,\dots , l$. 

\begin{theorem}\label{theorem_bilinear_controllability}
The bilinear system (\ref{bilinear_general}) is globally controllable if the accessibility distribution
\begin{eqnarray*}
&&{\cal Q} = \Big[B^0, B^1, \dots , B^l, -AB^0, \dots , (-1)^{n-1}A^{n-1}B^0,\\
&& B^1B^0,\dots , B^lB^0, [B^1,A]z,[B^2,A]z,\dots , \big[[B^l,A]A^{n-1}\big]z \Big]
\end{eqnarray*}
has full rank, that is rank$({\cal Q})=n$ for all $z\in\mathbb{R}^n$.
\end{theorem}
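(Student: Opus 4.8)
The plan is to recognize Theorem~\ref{theorem_bilinear_controllability} as an instance of the Lie algebra rank condition. I would argue that the columns listed in ${\cal Q}$ are precisely the values at $z$ of the iterated Lie brackets generated by the drift field $f_d(z)=Az$ and the control fields $\theta_0(z)=B^0$, $\theta_i(z)=B^iz$, so that full rank of ${\cal Q}$ at every $z$ is exactly the statement that these brackets span the tangent space everywhere. Having established this, I would invoke Theorem~\ref{theorem_local_controllability} pointwise and then Definition~\ref{def_global_controllability} to patch local controllability into global controllability.

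First I would fix the bracket convention $[X,Y]=\frac{\partial Y}{\partial z}X-\frac{\partial X}{\partial z}Y$ and compute the generators. Bracketing the drift with the affine control field gives $[f_d,\theta_0]=-AB^0$, and iterating the drift reproduces the chain $(-1)^{k}A^{k}B^0$ for $k=1,\dots,n-1$; bracketing two control fields gives $[\theta_i,\theta_0]=-B^iB^0$; and bracketing the drift with a bilinear control field gives $[f_d,\theta_i]=(B^iA-AB^i)z=[B^i,A]z$, whose further iterated brackets with $f_d$ produce the remaining terms $[[B^i,A],A]z,\dots$. This identifies every column of ${\cal Q}$ with an element of the accessibility algebra evaluated at $z$, so the accessibility distribution at $z$ is spanned by ${\cal Q}(z)$ and the rank hypothesis is the Lie algebra rank condition.

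With the rank condition in hand, I would apply Theorem~\ref{theorem_local_controllability}: since the Lie algebra generated by the vector fields of~(\ref{bilinear_general}) spans the tangent space at every $z\in\mathbb{R}^n$, the system is locally controllable at each point. Finally, Definition~\ref{def_global_controllability} applies verbatim: local controllability holds for all points of the state space, and patching together the local reachable neighbourhoods yields global controllability.

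The hard part will be the passage from accessibility (a full-rank distribution only guarantees a reachable set with nonempty interior) to genuine local controllability in the presence of the drift $Az$, since for systems with drift the rank condition alone generally yields only accessibility and the drift can obstruct motion in certain directions. I would address this by exploiting the special structure of the bilinear terms $\theta_i(z)=B^iz$, which supply bidirectional maneuverability because the sign of each $u_i$ is free, together with the affine term $B^0u_0$. Care is also needed because the bracket machinery above is natively continuous-time, whereas the result is invoked in the discrete-time setting; I would therefore lean on the discrete-time accessibility rank condition of~\cite{discrete_time_controllability,bullo_book} rather than on continuous flows, and verify that the local neighbourhoods used in the patching step of Definition~\ref{def_global_controllability} are traversable in both time directions.
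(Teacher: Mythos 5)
Your proposal follows essentially the same route as the paper's own proof: you compute the identical Lie brackets $[f_d,\theta_0]=-AB^0$, $[\theta_i,\theta_0]=\pm B^iB^0$, $[f_d,\theta_i]=[B^i,A]z$ and their iterates (truncated via Cayley--Hamilton), assemble them into the accessibility distribution ${\cal Q}$, and then invoke Theorem~\ref{theorem_local_controllability} pointwise together with Definition~\ref{def_global_controllability} to pass from local to global controllability. The caveats you flag at the end (accessibility versus genuine local controllability in the presence of drift, and the continuous-time flavour of the bracket machinery in a discrete-time setting) are real, but the paper's own proof does not resolve them either, so your argument is on par with the published one.
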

\begin{proof}
Local controllability of a general nonlinear system is characterized by the accessibility distribution ${\cal Q} = \textnormal{Lie}(\{X_1,X_2,\dots , X_M\})$, where $X_i$ are the vector fields for the nonlinear system. Hence, for analyzing the local controllability of the bilinear system (\ref{bilinear_general}), we need to compute the Lie brackets among the different vector fields. We have $f_d = Az$, $\theta_0(z)=B^0$ and $\theta_i(z) = B^iz$ for $i = 1,2,\dots , l$. Hence, 
\begin{eqnarray}
[\theta_0,\theta_i] = \frac{\partial \theta_i}{\partial z}\theta_0 - \frac{\partial \theta_0}{\partial z}\theta_i = B^iB^0.
\end{eqnarray}
Hence the higher order Lie bracket 
\[\big[\theta_0,[\theta_0,\theta_i]\big] =[\theta_0,B^iB^0]=0,\]
that is,
\begin{eqnarray}\label{lie_bracket_theta_theta}
\big(ad_{\theta_0}^k,\theta_i\big)=0 \textnormal{ for } k>1.
\end{eqnarray}
Next, we look at the Lie brackets between the drift vector field with $\theta_0$. In particular, 
\begin{eqnarray}
[f_d,\theta_0] = \frac{\partial \theta_0}{\partial z}f_d - \frac{\partial f_d}{\partial z}\theta_0 = -AB^0.
\end{eqnarray}
With this we have,
\begin{eqnarray}\label{lie_bracket_f_theta0}
\big(ad_{f_d}^k,\theta_0\big) = (-1)^kA^kB^0.
\end{eqnarray}
Similarly, 
\begin{eqnarray*}
[f_d,\theta_i] = \frac{\partial \theta_i}{\partial z}f_d - \frac{\partial f_d}{\partial z}\theta_i = B^iAz - AB^iz = [B^i,A]z
\end{eqnarray*}
and
\begin{eqnarray*}
\big(ad_{f_d}^2,\theta_i\big)= \big[f_d,[f_d,\theta_i]\big]= \big[f_d,[B^i,A]z\big] = \big[[B^i,A],A\big]z.
\end{eqnarray*}
With this, one can compute $\big(ad_{f_d}^k,\theta_i\big)$ as
\begin{eqnarray}\label{lie_bracket_f_thetai}
\big(ad_{f_d}^k,\theta_i\big) = \big[[B^i,A],A^{k-1}\big]z.
\end{eqnarray}
Now, since $A\in\mathbb{R}^{n\times n}$, from Cayley-Hamilton theorem one needs to compute $\big(ad_{f_d}^k,\theta_i\big)$ only upto for $k=n$ and then the accessibility matrix $\cal Q$ is 
\begin{eqnarray}\label{accessibility_bilinear}\nonumber
{\cal Q} &=& \textnormal{Lie}(\{f_d, \theta_i,\theta_0\})\\ \nonumber
&=& \Big[B^0, B^1, \dots , B^l, -AB^0, \dots , (-1)^{n-1}A^{n-1}B^0,\\ \nonumber
&& \quad B^1B^0,\dots , B^lB^0, [B^1,A]z,[B^2,A]z,\dots ,\\
&& \qquad \big[[B^l,A]A^{n-1}\big]z \Big]
\end{eqnarray}
Hence, the bilinear system (\ref{bilinear_general}) is locally controllable at $z_0\in\mathbb{R}^d$ if the rank of $\cal Q$ at $z_0$ is $d$ and from theorem \ref{theorem_local_controllability} and definition \ref{def_global_controllability} it is globally controllable if rank${\cal Q}=d$ for all $z\in\mathbb{R}^d$.
\end{proof}

\subsubsection{Controllability of the control-affine system and its Koopman representation}

Given a control-affine system of the form (\ref{affine_sys}) evolving in $\mathbb{R}^d$, lemma \ref{lemma_koopman_representation} gives the equivalent bilinear Koopman representation in $\mathbb{R}^n$. Again, theorem \ref{theorem_bilinear_controllability} gives sufficient conditions for global controllability of a bilinear system. With this, we have the following theorem relating controllability of the control-affine system (\ref{affine_sys}) and its Koopman bilinear representation. 

\begin{theorem}\label{theorem_nonlinear_koopman_controllability}
Consider a control-affine system of the form 
\[x_{t+1} = T(x_t) + g(x_t)u_t,\]
$x_t\in\mathbb{R}^d$ and its bilinear Koopman representation of the form
\[z_{t+1} = \mathbb{U}z_t + u_t B z_t,\]
with $z_t\in\mathbb{R}^d$, such that $\Phi:\mathbb{R}^d\rightarrow\mathbb{R}^n$ are the set of observables. Then if the Koopman bilinear form is controllable, then the nonlinear control-affine system is also controllable.
\end{theorem}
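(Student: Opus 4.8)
The plan is to exploit the \emph{exact} intertwining between the nonlinear flow and its bilinear Koopman lift furnished by Lemma \ref{lemma_koopman_representation}, together with the injectivity of the lifting map $\Phi$ and the existence of the left inverse $C^x$ satisfying $C^x\Phi(x)=x$. The idea is that any control sequence steering the bilinear system between two \emph{lifted} states must, by uniqueness of trajectories, trace out the image under $\Phi$ of a genuine nonlinear trajectory; projecting that trajectory back to $\mathbb{R}^d$ then realizes the corresponding steering for the control-affine system.

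First I would record the intertwining identity: under Assumptions \ref{assumption_span} and \ref{assumption_1}, for every admissible control $u_t$ and every nonlinear trajectory $\{x_t\}$ of (\ref{affine_sys}), the lifted sequence $z_t=\Phi(x_t)$ satisfies $z_{t+1}=\mathbb{U}z_t+u_tBz_t$ exactly. Next I would use determinism to upgrade this to forward invariance of the range $\mathcal{M}=\Phi(\mathbb{R}^d)$: given $z_0=\Phi(x_0)\in\mathcal{M}$ and any control sequence, the bilinear trajectory from $z_0$ is uniquely determined and coincides with $\{\Phi(x_t)\}$, where $\{x_t\}$ is the nonlinear trajectory launched from $x_0$ under the same control. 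Hence every bilinear trajectory issuing from a point of $\mathcal{M}$ stays in $\mathcal{M}$ for all time, regardless of the control applied.

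With this in hand the transfer argument is short. Given arbitrary $x_0,x_f\in\mathbb{R}^d$, set $z_0=\Phi(x_0)$ and $z_f=\Phi(x_f)$, both lying in $\mathcal{M}$. Controllability of the bilinear form supplies a control $\{u_t\}_{t=0}^{N-1}$ steering $z_0$ to $z_f$. By the invariance established above, the resulting trajectory equals $\{\Phi(x_t)\}$ for the nonlinear trajectory $\{x_t\}$ driven by the same $\{u_t\}$, so that $\Phi(x_N)=z_f=\Phi(x_f)$. Applying the projection gives $x_N=C^x\Phi(x_N)=C^xz_f=C^x\Phi(x_f)=x_f$ (equivalently, injectivity of $\Phi$ forces $x_N=x_f$). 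Thus the same control steers (\ref{affine_sys}) from $x_0$ to $x_f$, and since $x_0,x_f$ were arbitrary the control-affine system is controllable.

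The main obstacle is conceptual rather than computational: the bilinear system evolves in the ambient space $\mathbb{R}^n$, whereas the reachable lifted states live only on the $d$-dimensional manifold $\mathcal{M}=\Phi(\mathbb{R}^d)$, and a priori the control guaranteed by controllability of (\ref{Koopman_representation}) might drive the trajectory off $\mathcal{M}$. The resolution is that we never need to restrict the control: the intertwining together with uniqueness of solutions makes $\mathcal{M}$ automatically invariant, so \emph{any} bilinear trajectory starting on $\mathcal{M}$ remains there. One caveat worth flagging is that the homogeneous lift (\ref{Koopman_representation}) fixes the origin for every input, since $z=0$ maps to $0$; thus ``controllability'' here must be read as steering between nonzero states. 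Because $\Phi(0)=0$ isolates the origin, this restriction is exactly compatible with steering (\ref{affine_sys}) between distinct states and causes no loss.
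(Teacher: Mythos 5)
Your proposal is correct and rests on the same core mechanism as the paper's proof: the exact trajectory correspondence $z_t=\Phi(x_t)$ from Lemma \ref{lemma_koopman_representation} together with injectivity of $\Phi$. The paper argues contrapositively --- if no control steers $x_0$ to $x_T$, then the set of lifted trajectories $\Phi\big({\cal T}(x_0,u)\big)$ misses $\Phi(x_T)$, so the bilinear system is uncontrollable --- whereas you argue directly, pulling a steering control for the lifted system back down to the state space. The one substantive difference is that you explicitly establish forward invariance of the range ${\cal M}=\Phi(\mathbb{R}^d)$ under \emph{arbitrary} controls, via uniqueness of solutions of the bilinear recursion. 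This step is not cosmetic: the paper's contrapositive silently needs it too, since showing that $\Phi\big({\cal T}(x_0,u)\big)$ misses $\Phi(x_T)$ only yields uncontrollability of the lifted system if the bilinear reachable set from $\Phi(x_0)$ is exactly $\Phi\big({\cal T}(x_0,u)\big)$ and not something larger that leaves ${\cal M}$. So your version closes a gap the paper leaves implicit. Your caveat about the origin being fixed by the homogeneous lift is also apt. The only point to flag is that both arguments stand or fall with the \emph{exactness} of the intertwining, i.e.\ with Assumptions \ref{assumption_span} and \ref{assumption_1} holding exactly; since the proof of Lemma \ref{lemma_koopman_representation} discards higher-order terms, in general the correspondence is only approximate, and then neither the paper's argument nor yours applies verbatim --- but that is a limitation of the lemma, not of your reasoning.
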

\begin{proof}
We prove this by contrapositive arguments. Suppose that for the nonlinear control-affine system $x_{t+1} = T(x_t) + g(x_t)u_t$, there exists $x_0$ and $x_T$ such that there does not exist any control $u_t$ which can drive the system trajectory from $x_0$ to $x_T$ in time $T$. Let ${\cal T}(x_0,u)$ be the set of all control trajectories that start from $x_0$. Hence 
\[{\cal T}(x_0,u)\cap x_T = \{\varphi\},\]
where $\{\varphi\}$ denotes the empty set. Let $\Phi \big({\cal T}(x_0,u)\big)$ and $\Phi(x_T)$ be the images of ${\cal T}(x_0)$ and $x_T$ under the observables $\Phi$. Since the mapping $\Phi$ is injective, in the lifted space we again have 
\[\Phi \big({\cal T}(x_0,u)\big)\cap \Phi(x_T) = \{\varphi\}.\]
Hence, if the control-affine system is uncontrollable, then the lifted Koopman bilinear system is also uncontrollable. In other words, if the lifted Koopman bilinear system is controllable then so is the original control-affine nonlinear system.
\end{proof}

\subsection{Control Lyapunov Function}

Now that we have related the controllability properties of the control-affine system with those of the lifted Koopman bilinear system, we now address the problem of stabilization of the nonlinear control system. 


The control objective is to design a state feedback control law $u_t=k(x_t)$, with $k:\mathbb{R}^d\to\mathbb{R}$, such that the origin $x=0$ is asymptotically stable within some domain $\Omega\in\mathbb{R}^d$ for the closed loop system
\begin{eqnarray}\label{closed_loop_sys}
x_{t+1} = T(x_t) + g(x_t)k(x_t).
\end{eqnarray}

\begin{definition}[Control Lyapunov Function]
Let $V(x_t)$ be a radially unbounded, positive definite function with $V(x_t)>0$, $\forall x_t\neq 0$ and $V(0)=0$. If for any $x_t\in\mathbb{R}^d$, there exist real values $u_t$
such that
\[\Delta V(x_t,u_t)<0,\]
where $\Delta V(x_t,u_t) = V(x_{t+1})-V(x_t)$, then $V(\cdot)$ is called the \emph{discrete-time control Lyapunov function} for the system (\ref{affine_sys}). 
\end{definition}

In the subsequent subsection, we will use the concept of control Lyapunov function to prove the stability of the origin of the closed-loop system.

\subsection{Stabilization of Control-affine Systems}



Given a nonlinear control-affine system (\ref{affine_sys}), we seek a stabilizing control law of the form $u=k(z)$ which stabilizes the origin of the system (\ref{affine_sys}) and we do so by considering the bilinear Koopman representation of the control-affine system. In particular, we seek a state feedback control law $u_t = k(z_t)$ in the lifted space $\mathbb{R}^n$ which quadratically stabilizes the system (\ref{Koopman_representation}) inside the ellipsoid
\[\mathscr{E}=\{z\in\mathbb{R}^n | z^\top Q^{-1}z\leq 1\}, \quad Q \succ 0.\]

The design of stabilizing control law for general bilinear systems has been addressed in the literature and interested readers are referred to \cite{artstein_stabilization,astofli_feedback,xiushan_stabilization,khlebnikov2016quadratic}. The theorem here is stated and proved to suit the framework of the problem at hand.


To prove the stabilizability theorem, we need the following lemma \cite{petersen_stabilization}:
\begin{lemma}\label{peterson_lemma}
Let $G=G^\top\in\mathbb{R}^{n\times n}$, $M\in\mathbb{R}^{n\times q}$, $N\in\mathbb{R}^{n}$ and $0\prec P=P^\top\in\mathbb{R}^{q\times q}$. The matrix inequality
\[G + M\delta N^\top + N\delta^\top M^\top\prec 0\]
holds for all $\delta\in\mathbb{R}^q$,  $\delta^\top P \delta \leq 1$ if and only if there exists a real number $\varepsilon$ such that
\[
\begin{pmatrix}
G & M & N\\
M^\top & -\varepsilon P & 0\\
N^\top & 0 & -\frac{1}{\varepsilon}I
\end{pmatrix}\prec 0.\]
where 0 and $I$ denotes matrix with all zeros and identity matrix of appropriate dimensions respectively.
\end{lemma}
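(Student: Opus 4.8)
The plan is to prove the equivalence in two stages: first strip the block structure of the matrix inequality down to a scalar-parametrized inequality via a Schur complement, and then show that this parametrized inequality is equivalent to the robust (all-$\delta$) inequality. The second stage is the actual Petersen-type content and is where the work lies. For the first stage, note that for the block matrix to be negative definite its lower-right diagonal blocks $-\varepsilon P$ and $-\frac{1}{\varepsilon}I$ must themselves be negative definite, which forces $\varepsilon>0$; thus the ``real number $\varepsilon$'' is automatically positive. Writing the matrix as $\begin{pmatrix} G & B\\ B^\top & D\end{pmatrix}$ with $B=\begin{pmatrix}M & N\end{pmatrix}$ and $D=\mathrm{diag}(-\varepsilon P,\,-\frac{1}{\varepsilon}I)\prec0$, the Schur-complement criterion gives that the block inequality holds iff $G-BD^{-1}B^\top\prec0$. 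Since $D^{-1}=\mathrm{diag}(-\frac{1}{\varepsilon}P^{-1},\,-\varepsilon I)$, a direct multiplication yields
\[G-BD^{-1}B^\top = G+\tfrac{1}{\varepsilon}M P^{-1}M^\top+\varepsilon N N^\top,\]
so the displayed inequality is equivalent to the existence of some $\varepsilon>0$ with $G+\frac{1}{\varepsilon}M P^{-1}M^\top+\varepsilon N N^\top\prec0$.

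Next I would reduce the robust inequality to a scalar statement. For any $x$ one has $x^\top(M\delta N^\top+N\delta^\top M^\top)x = 2(N^\top x)(M^\top x)^\top\delta$, which is linear in $\delta$; maximizing it over the ellipsoid $\delta^\top P\delta\le1$ (substitute $\eta=P^{1/2}\delta$ and apply Cauchy--Schwarz) gives the exact value $2|N^\top x|\sqrt{x^\top MP^{-1}M^\top x}$. Hence the robust inequality is equivalent to
\[x^\top Gx + 2\sqrt{\alpha(x)\beta(x)}<0\quad\text{for all }x\neq0,\]
where $\alpha(x)=x^\top NN^\top x$ and $\beta(x)=x^\top MP^{-1}M^\top x$. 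The implication from the $\varepsilon$-inequality to the robust one is then immediate: by the arithmetic--geometric mean inequality $\frac{1}{\varepsilon}\beta+\varepsilon\alpha\ge2\sqrt{\alpha\beta}$, so negativity of $x^\top Gx+\frac{1}{\varepsilon}\beta+\varepsilon\alpha$ forces negativity of $x^\top Gx+2\sqrt{\alpha\beta}$.

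The hard part, and the main obstacle, is the converse: producing a single $\varepsilon>0$ that works simultaneously for every $x$, since the $\varepsilon$ that makes the arithmetic--geometric mean inequality tight depends on $x$. Here I would exploit homogeneity to restrict to the compact unit sphere and study $\phi(\varepsilon)=\lambda_{\max}\!\big(G+\varepsilon NN^\top+\tfrac{1}{\varepsilon}MP^{-1}M^\top\big)$. This $\phi$ is convex in $\varepsilon$ (a pointwise maximum over $x$ of functions that are convex in $\varepsilon$), continuous on $(0,\infty)$, and tends to $+\infty$ at both ends in the nondegenerate case, so it attains its minimum at some $\varepsilon^\star>0$; the degenerate cases $N=0$ or $M=0$ are handled separately by driving $\varepsilon$ to one extreme. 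At $\varepsilon^\star$ I would invoke first-order optimality (the envelope/Danskin argument): letting $x^\star$ be a maximizing unit eigenvector, stationarity of $\phi$ reads $\alpha(x^\star)-(\varepsilon^\star)^{-2}\beta(x^\star)=0$, which is precisely the condition $(\varepsilon^\star)^2\alpha(x^\star)=\beta(x^\star)$ under which the arithmetic--geometric mean inequality is an equality. Consequently
\[\phi(\varepsilon^\star)=x^{\star\top}Gx^\star+2\sqrt{\alpha(x^\star)\beta(x^\star)}<0\]
by the scalar form of the robust inequality, giving $G+\varepsilon^\star NN^\top+\frac{1}{\varepsilon^\star}MP^{-1}M^\top\prec0$, as required.

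The one technical wrinkle I anticipate is when $\lambda_{\max}$ is a repeated eigenvalue at $\varepsilon^\star$, so that stationarity only holds in a convex-combination sense over the top eigenspace. I would resolve this by an intermediate-value argument along the connected unit sphere of that eigenspace, on which $\alpha-(\varepsilon^\star)^{-2}\beta$ is continuous, to locate a single eigenvector $x^\star$ satisfying the equality exactly. Modulo this point, the converse is the essential content of Petersen's lemma and may alternatively be cited directly from \cite{petersen_stabilization}.
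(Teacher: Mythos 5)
The paper does not actually prove this lemma: it is stated and then cited directly from \cite{petersen_stabilization}, so there is no in-paper proof to compare against. Judged on its own merits, your proposal is a correct and essentially complete self-contained proof of this (rank-one, ellipsoidal) variant of Petersen's lemma. The first stage is routine and right: negative definiteness of the $(2,2)$ and $(3,3)$ blocks forces $\varepsilon>0$, and the Schur complement reduces the LMI to $G+\tfrac{1}{\varepsilon}MP^{-1}M^\top+\varepsilon NN^\top\prec 0$. The second stage correctly identifies the robust inequality with the scalar condition $x^\top Gx+2\sqrt{\alpha(x)\beta(x)}<0$ via exact maximization of the linear form over the ellipsoid, gets the easy direction from the arithmetic--geometric mean inequality, and for the converse the convexity of $\phi(\varepsilon)=\lambda_{\max}(G+\varepsilon NN^\top+\tfrac{1}{\varepsilon}MP^{-1}M^\top)$, its coercivity at both ends of $(0,\infty)$ in the nondegenerate case, and the Danskin stationarity condition at the minimizer do yield an eigenvector at which the AM--GM bound is tight; your intermediate-value argument on the connected unit sphere of the top eigenspace is exactly the right patch for the repeated-eigenvalue case, and the degenerate cases $M=0$, $N=0$ are correctly dispatched by sending $\varepsilon$ to an extreme. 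In short, you have supplied a genuine proof where the paper only supplies a citation; the citation buys brevity, while your argument makes the result self-contained and makes transparent why a single $\varepsilon$ suffices uniformly in $x$ (the only nontrivial point of the lemma).
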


With this, we have the following theorem.

\begin{theorem}\label{theorem_stabilizing_control}
Let a positive symmetric matrix $Q=Q^\top>0$ and a vector $y$ be such that the matrix inequality
\begin{eqnarray}\label{deltaV_eqv_theorem}
\begin{pmatrix}
-Q & 0 & y & Q\mathbb{U}^\top\\
0 & -\varepsilon Q & 0 & QB^\top\\
y^\top & 0 & -\frac{1}{\varepsilon}I & 0\\
\mathbb{U} Q & BQ & 0 & -Q
\end{pmatrix}\prec 0
\end{eqnarray}
is satisfied for some $\varepsilon\in\mathbb{R}$. Then the linear feedback with controller gain $k=Q^{-1}y$ stabilizes the bilinear system 
\[z_{t+1}=\mathbb{U}z_t + u_tBz_t\]
inside the ellipsoid 
\[\mathscr{E} = \{z\in\mathbb{R}^n: z^\top Q^{-1}z \leq 1\},\]
and the quadratic form $V(z)=z^\top Q^{-1}z$ is control Lyapunov function for the bilinear system (\ref{Koopman_representation}).

\end{theorem}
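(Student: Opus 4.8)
The plan is to show that the matrix inequality (\ref{deltaV_eqv_theorem}) is, after an invertible change of variables, \emph{exactly} the statement that $V(z)=z^\top Q^{-1}z$ strictly decreases along the closed loop generated by the gain $k=Q^{-1}y$; Petersen's Lemma \ref{peterson_lemma} is the device that converts the state-dependent (hence ``uncertain'') feedback into a robust matrix inequality. Throughout I write $P=Q^{-1}$, so $V(z)=z^\top P z$ is positive definite, radially unbounded, and vanishes only at the origin. With the linear feedback $u_t=k^\top z_t$ the closed loop is $z_{t+1}=(\mathbb{U}+u_tB)z_t$, and the goal is to prove $\Delta V(z_t,u_t)=z_{t+1}^\top P z_{t+1}-z_t^\top Pz_t<0$ for every $z_t\in\mathscr{E}\setminus\{0\}$.

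First I would recast (\ref{deltaV_eqv_theorem}) into the template of Lemma \ref{peterson_lemma}. Applying the symmetric permutation that groups the first and fourth block rows and columns together (a congruence, hence sign-preserving), the left-hand side of (\ref{deltaV_eqv_theorem}) takes the form $\begin{pmatrix} G & M & N\\ M^\top & -\varepsilon Q & 0\\ N^\top & 0 & -\frac1\varepsilon I\end{pmatrix}$ with $G=\begin{pmatrix}-Q & Q\mathbb{U}^\top\\ \mathbb{U}Q & -Q\end{pmatrix}$, $M=\begin{pmatrix}0\\ BQ\end{pmatrix}$, $N=\begin{pmatrix}y\\0\end{pmatrix}$, and Petersen matrix $Q$. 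Lemma \ref{peterson_lemma} then says (\ref{deltaV_eqv_theorem}) holds for some $\varepsilon$ if and only if
\[ \begin{pmatrix}-Q & Q\mathbb{U}^\top+y\delta^\top QB^\top\\ \mathbb{U}Q+BQ\delta y^\top & -Q\end{pmatrix}\prec 0 \qquad\text{for all }\delta\ \text{with}\ \delta^\top Q\delta\le 1. \]
A congruence by $\mathrm{diag}(Q^{-1},I)$ followed by a Schur complement on the lower-right $-Q$ block turns this into $A_\delta^\top P A_\delta-P\prec 0$ for all admissible $\delta$, where $A_\delta:=\mathbb{U}+BQ\delta\,k^\top$ and I have used $k=Py=Q^{-1}y$ together with the symmetry of $Q$ to identify $\mathbb{U}^\top+k\delta^\top QB^\top$ with $A_\delta^\top$.

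The decisive step is the substitution tying the two ellipsoids together. Fix any $z\in\mathscr{E}\setminus\{0\}$ and set $\delta=Q^{-1}z$; then $\delta^\top Q\delta=z^\top Q^{-1}z\le 1$, so $\delta$ is admissible, and $BQ\delta=Bz$, whence $A_\delta z=\mathbb{U}z+(k^\top z)Bz$ is precisely the closed-loop successor $z_{t+1}$ produced by $u=k^\top z$. Evaluating the strictly negative definite form $A_\delta^\top P A_\delta-P$ at this $z$ yields $\Delta V=z^\top(A_\delta^\top PA_\delta-P)z<0$. Since $V$ is a positive definite, radially unbounded quadratic with $V(0)=0$ and $\Delta V<0$ on $\mathscr{E}\setminus\{0\}$, $V$ is a control Lyapunov function and the gain $k=Q^{-1}y$ renders the origin asymptotically stable inside $\mathscr{E}$.

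I expect the main obstacle to be conceptual rather than computational: recognizing that (\ref{deltaV_eqv_theorem}) is a permuted instance of Petersen's lemma, and, above all, discovering the change of variable $\delta=Q^{-1}z$ that makes the \emph{worst-case} perturbation set $\{\delta:\delta^\top Q\delta\le1\}$ of the lemma coincide with the \emph{stability} ellipsoid $\mathscr{E}=\{z:z^\top Q^{-1}z\le1\}$ while simultaneously identifying $A_\delta z$ with the true closed-loop update. The routine parts---tracking the congruences and Schur complements so that the $B$, $\mathbb{U}$ and $k$ terms land in the correct blocks---should then follow mechanically.
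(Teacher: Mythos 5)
Your proof is correct and uses the same machinery as the paper's --- Petersen's Lemma \ref{peterson_lemma} to absorb the state-dependent feedback term over the ellipsoid, Schur complements, and the congruence by $\textnormal{diag}(Q,\dots)$ --- only traversed in the opposite direction: the paper synthesizes the LMI starting from $\Delta V<0$, whereas you verify that the LMI implies $\Delta V<0$, which is the implication the theorem actually asserts. Your explicit substitution $\delta=Q^{-1}z$, which makes the Petersen uncertainty set $\{\delta:\delta^\top Q\delta\le 1\}$ coincide with $\mathscr{E}$ and identifies $A_\delta z$ with the closed-loop successor, cleanly formalizes a step the paper leaves implicit by using $z$ simultaneously as the state and as the uncertainty parameter.
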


\begin{proof}
Consider the quadratic Lyapunov function $V(z)=z^\top P z$ for $P=P^\top >0$. Then
{\small
\begin{eqnarray}\label{deltaV}\nonumber
\Delta V &=& V(z_{t+1})-V(z_t) \\ \nonumber
&=& (\mathbb{U} z_t + u_tBz_t)^\top P (\mathbb{U} z_t + u_tBz_t)  - z_t^\top P z_t\\ \nonumber
&=& z_t^\top (\mathbb{U}^\top P \mathbb{U} - P)z_t +  u z_t^\top (\mathbb{U}^\top P B + B^\top P \mathbb{U})z_t \\
&& \quad + u^2 z_t^\top B^\top PB z_t
\end{eqnarray}
}

For the system (\ref{Koopman_representation}) to be stable, $\Delta V<0$ for $z\neq 0$. We seek a state feedback control law $u_t = k(z_t)$ that stabilizes the origin. Hence, for stabilizability, rewriting (\ref{deltaV}) in matrix form, we have
{\small
\begin{eqnarray}
\begin{pmatrix}
\mathbb{U}^\top P \mathbb{U} - P & k(z)B^\top P\\ 
+ k(z) \mathbb{U}^\top P B + k(z) B^\top P \mathbb{U} \\
PB k(z) & -P
\end{pmatrix}\prec 0.
\end{eqnarray}
}

Under the assumption that we use linear feedback control $u=k^\top z$, we have
{\small
\begin{eqnarray}\label{deltaV_expanded}\nonumber	
\begin{pmatrix}
\mathbb{U}^\top P \mathbb{U} - P & 0\\
0 & -P
\end{pmatrix} + \begin{pmatrix}
\mathbb{U}^\top P B\\
PB
\end{pmatrix}z\begin{pmatrix}
k^\top & 0
\end{pmatrix}+\\
\begin{pmatrix}
k\\
0
\end{pmatrix}z^\top \begin{pmatrix}
B^\top P \mathbb{U} & B^\top P
\end{pmatrix}\prec 0.
\end{eqnarray}
}

The goal is to make the above matrix inequality (\ref{deltaV_expanded}) hold for all $z$ from the ellipsoid
\[\mathscr{E}=\{z\in\mathbb{R}^n: z^\top P z \leq 1\}.\]

Using lemma \ref{peterson_lemma}, equation (\ref{deltaV_expanded}) can be equivalently written as 
\begin{eqnarray}\label{deltaV_eqv}
\begin{pmatrix}
\mathbb{U}^\top P \mathbb{U} - P & 0 & \mathbb{U}^\top P B & k\\
0 & -P & PB & 0\\
B^\top P \mathbb{U} & B^\top P & -\varepsilon P & 0\\
k^\top & 0 & 0 & -\frac{1}{\varepsilon}I
\end{pmatrix}\prec 0.
\end{eqnarray}

The feasibility of (\ref{deltaV_eqv}) implies stability and thus the feedback control law $u=k^\top z$ stabilizes the bilinear control system 
\[z_{t+1}=\mathbb{U}z_t + u_tBz_t.\]

Using Schur's Lemma, from (\ref{deltaV_eqv}) we have

\begin{eqnarray}\label{deltaV_eqv2}
\begin{pmatrix}
\mathbb{U}^\top P \mathbb{U} - P & \mathbb{U}^\top P B & k\\
B^\top P \mathbb{U} & B^\top P -\varepsilon P & 0\\
k^\top & 0 & -\frac{1}{\varepsilon}I
\end{pmatrix}\prec 0
\end{eqnarray}
which can be further written as
{\small
\begin{eqnarray}\label{deltaV_eqv3}
\begin{pmatrix}
-P & 0 & k\\
0 & -\varepsilon P & 0\\
k^\top & 0 & -\frac{1}{\varepsilon}I
\end{pmatrix} + \begin{pmatrix}
\mathbb{U}^\top\\
B^\top\\
0
\end{pmatrix}P \begin{pmatrix}
\mathbb{U} & B & 0
\end{pmatrix}\prec 0.
\end{eqnarray}
}

Applying Schur's Lemma to (\ref{deltaV_eqv3}), we have
\begin{eqnarray}\label{deltaV_eqv4}
\begin{pmatrix}
-P & 0 & k & \mathbb{U}^\top\\
0 & -\varepsilon P & 0 & B^\top\\
k^\top & 0 & -\frac{1}{\varepsilon}I & 0\\
\mathbb{U} & B & 0 & -P^{-1}
\end{pmatrix}\prec 0.
\end{eqnarray}

Let $Q = P^{-1}$ and pre- and post-multiplying (\ref{deltaV_eqv4}) by $\textnormal{diag}(Q,Q,I,I)$, we have
\begin{eqnarray}\label{deltaV_eqv5}
\begin{pmatrix}
-Q & 0 & Qk & Q\mathbb{U}^\top\\
0 & -\varepsilon Q & 0 & QB^\top\\
k^\top Q & 0 & -\frac{1}{\varepsilon}I & 0\\
\mathbb{U} Q & BQ & 0 & -Q
\end{pmatrix}\prec 0.
\end{eqnarray}
Setting $y=Qk\in\mathbb{R}^n$, we have the proof.
\end{proof}

The above theorem proves the existence of a state feedback stabilizing control law that stabilizes the origin of a bilinear control system. However, we can do more. In particular, we can strive to maximize the stabilizability ellipsoid. One way to do it is to maximize the volume of the stabilizability ellipsoid. This can be achieved in the following way:

\begin{corollary}\label{main_corollary}
Consider the optimization problem:
\begin{eqnarray}\label{ellipsoid_max}
\begin{aligned}
&\qquad\quad \max \log \det Q\\
\textnormal{subject to } & \begin{pmatrix}
-Q & 0 & y & Q\mathbb{U}^\top\\
0 & -\varepsilon Q & 0 & QB^\top\\
y^\top & 0 & -\frac{1}{\varepsilon}I & 0\\
\mathbb{U} Q & BQ & 0 & -Q
\end{pmatrix}\prec 0    
\end{aligned}
\end{eqnarray}
with respect to the optimization variables $Q=Q^\top\in\mathbb{R}^{n\times n}$ and $y\in\mathbb{R}^n$. Then the ellipsoid 
\[\hat{\mathscr{E}} = \{  z\in\mathbb{R}^n: z^\top Q^{-1}z \leq 1\}\] is the stabilizability ellipsoid of the system (\ref{Koopman_representation}) with the feedback control law given by $u=\hat{k}^\top z$, where $\hat{k}=\hat{Q}^{-1}\hat{y}$.
\end{corollary}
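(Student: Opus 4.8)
The plan is to read off the corollary essentially directly from Theorem~\ref{theorem_stabilizing_control} and to reduce the remaining content to the geometry of the objective. Theorem~\ref{theorem_stabilizing_control} already establishes that \emph{any} pair $(Q,y)$ (together with some admissible $\varepsilon$) satisfying the displayed matrix inequality yields a stabilizing linear feedback $k=Q^{-1}y$ for which $V(z)=z^\top Q^{-1}z$ is a control Lyapunov function, so that the sublevel set $\{z:z^\top Q^{-1}z\le 1\}$ is a certified stabilizability region for~(\ref{Koopman_representation}). Consequently, the feasible set of the optimization problem~(\ref{ellipsoid_max}) is exactly the family of stabilizability ellipsoids produced by the theorem, and the only genuinely new ingredient is the choice of objective. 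The first step is therefore to verify that this objective selects the \emph{largest} such ellipsoid.

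Next I would connect $\log\det Q$ to the volume of $\hat{\mathscr{E}}$. Writing the eigenvalues of $Q\succ 0$ as $\lambda_1,\dots,\lambda_n$, the ellipsoid $\{z:z^\top Q^{-1}z\le 1\}$ has principal semi-axes of length $\sqrt{\lambda_i}$, and hence Euclidean volume $V_n\prod_i\sqrt{\lambda_i}=V_n\sqrt{\det Q}$, where $V_n$ is the volume of the unit ball in $\mathbb{R}^n$. Taking logarithms gives $\log\mathrm{vol}(\hat{\mathscr{E}})=\log V_n+\tfrac12\log\det Q$, so maximizing $\log\det Q$ over the feasible set is equivalent to maximizing the volume of the stabilizability ellipsoid. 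This is precisely the assertion that $\hat{\mathscr{E}}$ is the maximal-volume stabilizability ellipsoid, and the corresponding feedback is recovered from the optimizer as $\hat{k}=\hat{Q}^{-1}\hat{y}$, exactly as in Theorem~\ref{theorem_stabilizing_control}.

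To argue well-posedness, I would note that for a fixed $\varepsilon>0$ the matrix inequality in~(\ref{ellipsoid_max}) is affine in the variables $(Q,y)$, so the feasible set is convex, while $Q\mapsto\log\det Q$ is strictly concave on the cone of positive definite matrices. Thus~(\ref{ellipsoid_max}) is a determinant-maximization semidefinite program, which admits a unique global maximizer $(\hat{Q},\hat{y})$; by the previous paragraph this maximizer delivers the largest-volume certified ellipsoid, completing the argument.

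The main obstacle is the role of $\varepsilon$. Although the inequality is affine in $(Q,y)$ for each fixed $\varepsilon$, it is jointly nonconvex in $(\varepsilon,Q,y)$ because of the coupled blocks $-\varepsilon Q$ and $-\tfrac1\varepsilon I$. The clean way to handle this, which I would adopt, is to treat $\varepsilon>0$ as a scalar search parameter: for each fixed $\varepsilon$ one solves the convex max-det program above, and then optimizes over $\varepsilon$ by a one-dimensional search, retaining the feasible triple that yields the largest $\log\det Q$. For every fixed $\varepsilon$ the resulting ellipsoid is genuinely volume-maximal among the certificates with that $\varepsilon$, so the convexity-based optimality survives the reduction, and the best value over the scalar sweep yields the stabilizability ellipsoid $\hat{\mathscr{E}}$ asserted by the corollary.
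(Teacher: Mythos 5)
Your overall route is the same as the paper's: the corollary is stated there without a separate proof, precisely because, as you observe, Theorem~\ref{theorem_stabilizing_control} already certifies that every feasible pair $(Q,y)$ yields a stabilizing feedback $k=Q^{-1}y$ with stabilizability ellipsoid $\{z: z^\top Q^{-1}z\le 1\}$, and the only added content is that $\log\det Q$ is a monotone function of the ellipsoid's volume ($\mathrm{vol}=V_n\sqrt{\det Q}$), so the optimizer picks the largest certified ellipsoid. Your handling of $\varepsilon$ as a scalar line-search parameter, with the problem being an LMI in $(Q,y)$ for each fixed $\varepsilon$, is also the standard and correct reading.

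The one step you should not assert as proved is that the program ``admits a unique global maximizer.'' The feasible set of~(\ref{ellipsoid_max}) is defined by a \emph{strict} inequality and is therefore open, so the supremum of $\log\det Q$ need not be attained; the paper itself flags exactly this well-posedness issue immediately after the corollary and replaces the constraint by the non-strict inequality with the $-\theta Q$ block, $0<\theta<1$, in~(\ref{main_opt}). In addition, since the objective does not depend on $y$, strict concavity of $\log\det$ gives at most uniqueness of $\hat{Q}$, not of the pair $(\hat{Q},\hat{y})$. Neither point undermines your main argument (any feasible, near-optimal pair still yields a certified ellipsoid of essentially maximal volume), but the existence/uniqueness claim as written is a gap that the paper resolves by relaxing the constraint rather than by the convexity argument you give.
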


Though the optimization problem (\ref{ellipsoid_max}) of corollary (\ref{main_corollary}) maximizes the stabilizability ellipsoid, the objective function is optimized under a strict constraint. To make the problem well-posed, we make the strict inequality a non-strict inequality as follows:
\begin{eqnarray}
\begin{pmatrix}
-\theta Q & 0 & y & Q\mathbb{U}^\top\\
0 & -\varepsilon Q & 0 & QB^\top\\
y^\top & 0 & -\frac{1}{\varepsilon}I & 0\\
\mathbb{U} Q & BQ & 0 & -Q
\end{pmatrix}\preceq 0   
\end{eqnarray}
where $0 < \theta < 1$. Hence, the feedback control law which maximizes the stabilizability ellipsoid is obtained as the solution of the following optimization problem:
\begin{eqnarray}\label{main_opt}
\begin{aligned}
&\qquad\quad \max \log \det Q\\
\textnormal{subject to } & \begin{pmatrix}
-\theta Q & 0 & y & Q\mathbb{U}^\top\\
0 & -\varepsilon Q & 0 & QB^\top\\
y^\top & 0 & -\frac{1}{\varepsilon}I & 0\\
\mathbb{U} Q & BQ & 0 & -Q
\end{pmatrix}\preceq 0.
\end{aligned}
\end{eqnarray}

Theorem \ref{theorem_stabilizing_control} provides a state feedback control law that stabilizes the origin of a bilinear control system , which in our case is the lifted Koopman system. Now, the mapping $\Phi$ is such that $\Phi(0)=0$. Hence the stabilized trajectories of the lifted bilinear Koopman system, which converge to the origin of the lifted space $\mathbb{R}^n$, when projected back to the original state space $\mathbb{R}^d$, they too converge to the origin of the state space. Thus the state feedback control law stabilizes the origin of the nonlinear control-affine system.

\section{Extended Dynamic Mode Decomposition (EDMD)}\label{section_DMD}

Typically, the Koopman operator $(\mathbb{U})$ for a dynamical system is infinite-dimensional and hence for data-driven computations, a finite-dimensional approximation of the Koopman operator is computed from the time-series data. Moreover, satisfying assumptions \ref{assumption_span} and \ref{assumption_1} is difficult. In this section, we briefly describe the procedure for computing the approximate system matrices of the lifted bilinear system from time-series data.

Consider snapshots of time-series data  
\[X = [x_1, x_2, \cdots , x_{M}], \quad Y = [y_1, y_2, \cdots , y_{M}],\]
obtained from simulating a discrete time dynamical system $y_i = T(x_i)$ or from an experiment, where $x_i$ and $y_i$ are consecutive data points. Let $x_i,y_i\in\mathbb{R}^d$. Let $\Phi=
\{\phi_1,\phi_2,\ldots,\phi_n\}$ be the set of dictionary functions or observables, where $\phi_i : \mathbb{R}^d \to \mathbb{C}$. Let ${\cal G}_{\Phi}$ denote the span of ${\Phi}$ such that ${\cal G}_{\Phi}\subset L_2(X,{\cal B},\mu)$. One important observation is the fact that for a good approximation of the Koopman operator, the set of dictionary functions should be able to approximate the leading eigenfunctions of the Koopman operator. 

With this, if $\bf K$ is the finite dimensional approximation of the Koopman operator, then the  matrix $\bf K$ can be obtained as a solution to the following least square problem \cite{EDMD_williams}. 
\begin{equation}\label{edmd_op}
\min\limits_{\bf K}\parallel {\bf G}{\bf K}-{\bf A}\parallel_F
\end{equation}
\begin{eqnarray}\label{edmd1}
\begin{aligned}
& {\bf G}=\frac{1}{M}\sum_{m=1}^M {\Phi}({x}_m)^\top {\Phi}({x}_m)\\
& {\bf A}=\frac{1}{M}\sum_{m=1}^M {\Phi}({x}_m)^\top {\Phi}({y}_{m}),
\end{aligned}
\end{eqnarray}
with ${\bf K},{\bf G},{\bf A}\in\mathbb{C}^{n\times n}$. The optimization problem (\ref{edmd_op}) can be solved explicitly to obtain following solution for the matrix $\bf K$, i.e.,
\begin{eqnarray}\label{K_EDMD}
{\bf K}_{EDMD}={\bf G}^\dagger {\bf A},\label{EDMD_formula}
\end{eqnarray}
where ${\bf G}^{\dagger}$ is the Moore-Penrose inverse of matrix $\bf G$.
Dynamic Mode Decomposition (DMD) is a special case of EDMD algorithm with ${\Phi}(x) = x$.

Now, within the setting of this paper, since the Koopman representation of the nonlinear system (\ref{affine_sys}) is a bilinear system of the form (\ref{Koopman_representation}), the finite-dimensional representation of the control-affine system (\ref{affine_sys}) will take the form
\begin{eqnarray}\label{finite_bilinear}
z_{t+1} = Az_t + u_tBz_t.
\end{eqnarray}

It is assumed that the leading Koopman eigenfunctions are contained inside ${\cal G}_{\Phi}$, and the eigenvalues of ${\bf K}_{EDMD}$ are approximations of the Koopman eigenvalues. Let the right eigenvectors of Koopman ${\bf K}_{EDMD}$ be denoted by $v_j$, $j = 1,\hdots,n$. Then the Koopman eigenfunctions are approximated using the right eigenvectors and the dictionary as 
\begin{align}
    h_j & = {\Phi}^{\top} v_j, \quad j = 1,2,\dots,n, \label{eq:Koopman_EF}
\end{align}
where $h_j$ is the approximation of the eigenfunction of Koopman operator corresponding to the $j^{th}$ eigenvalue. For the bilinear system (\ref{finite_bilinear}), the system matrix $A$ on the lifted space is expressed as a block diagonal matrix of Koopman eigenvalues $\lambda_1, \lambda_2, \dots, \lambda_n$ where 
\begin{enumerate}
    \item ${A}_{(i,i)} = \lambda_i$ if $h_i$ is real, and 
    \item if $h_i$ and $h_{i+1}$ are complex conjugate pairs, then $\begin{bmatrix} {A}_{i,i} & {A}_{i,i+1} \\ {A}_{i+1,i} & {A}_{i+1,i+1} \end{bmatrix}$ = $\vert \lambda_i \vert \begin{bmatrix} \cos(\angle \lambda_i) & \sin(\angle \lambda_i) \\ -\sin(\angle \lambda_i) & \cos(\angle \lambda_i) \end{bmatrix}$
\end{enumerate}
From \eqref{eq:Koopman_EF}, while the Koopman eigenfunctions are real-valued, we have, 
\begin{align*}
    H(x) = V^{\top} {\bf \Psi} (x), 
\end{align*}
where 
\begin{align*}
&V = \begin{bmatrix} v_1 & v_2 & \hdots & v_n \end{bmatrix} \\
&H(x) = \begin{bmatrix} h_1(x) & h_2(x) & \dots & h_n(x) \end{bmatrix}.
\end{align*}
We now approximate the matrix $B \in \mathbb{R}^{n\times n}$ as 
\begin{align*}
    \frac{\partial H}{\partial x} g(x) & = V^{\top} \frac{\partial {\Phi}}{\partial x} g(x) \\
    & = B H(x) = B V^{\top} {\Phi} (x) = \tilde B {\Phi} (x).
\end{align*}
Under the assumption that the observable functions ${\Phi}$ is a collection of monomial functions, the terms $\frac{\partial {\Phi}}{\partial x}$ lies in the span of ${\Phi}(x)$. Furthermore, the eigenvector matrix, $V$ is invertible and hence the $\tilde B$ can be found explicitly. Finally, the matrix $B$ can be computed such that $B = \tilde B (V^{\top})^{-1}$. 


\section{Simulation Results}\label{section_simulation}
This section discusses two nonlinear dynamical systems where we stabilize the origin using our proposed approach. In particular, we stabilize the origin for the Van der Pol oscillator and the Henon map. 

\subsection{Van der Pol Oscillator}

The Van der Pol oscillator is a non-conservative dynamical system with nonlinear damping whose equations of motion are given by
\begin{eqnarray}\label{van_der_pol}
\begin{aligned}
& \dot{x} = y\\
& \dot{y} = \mu (1-x^2)y -x.
\end{aligned}
\end{eqnarray}

The constant $\mu\geq 0$ controls the damping and when $\mu=0$, one recovers the simple harmonic oscillator, which is a conservative system. For $\mu > 0$, the system exhibits stable limit cycle oscillations, with the origin being an unstable equilibrium point. The invariant measure for the Van der Pol oscillator is shown in Fig. \ref{van_der_pol_inv_meas_fig}. 

\begin{figure}[htp!]
\centering
\includegraphics[scale=.225]{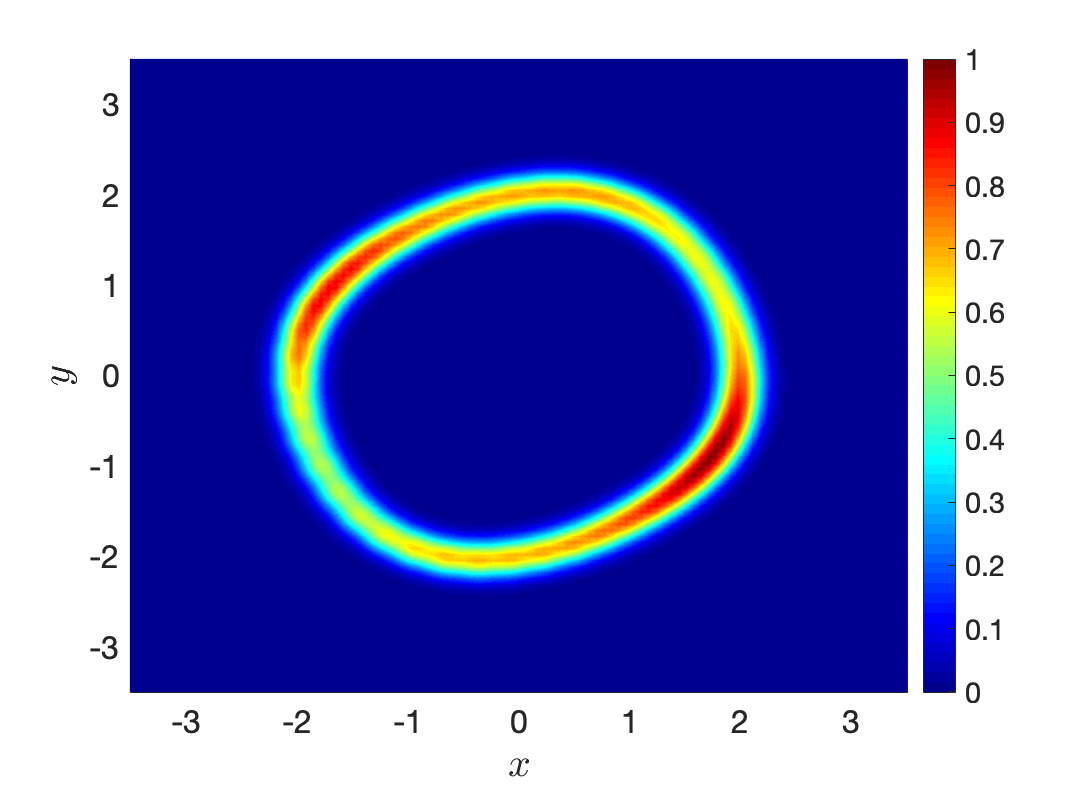}
\caption{Invariant measure of the unforced Van der Pol oscillator.}\label{van_der_pol_inv_meas_fig}
\end{figure}

The goal is to stabilize the origin by designing a state feedback law. To this end, we assume the following controlled Van der Pol oscillator:
\begin{eqnarray}\label{control_van_der_pol}
\begin{aligned}
& \dot{x} = y\\
& \dot{y} = \mu (1-x^2)y -x + u,
\end{aligned}
\end{eqnarray}
where $u$ is the scalar control input. Comparing (\ref{control_van_der_pol}) with (\ref{affine_sys}), we have $g(x,y)=[0\quad 1]^\top$. 

For simulation purposes, the constant $\mu$ was set to one and data was collected for 10 seconds with discretization step $\delta t = .01$ seconds. Furthermore, as dictionary functions, we used monomials up to order 5. Hence cardinality of the dictionary function set $(\Phi)$ is 21, so that the Koopman operator ${\bf K}\in\mathbb{R}^{21\times 21}$. Once the Koopman operator for the open-loop system is computed, the optimization problem (\ref{main_opt}) computes the optimal stabilizability ellipsoid $(\hat{Q})$ and the vector $y$. For simulation purposes, the parameter $\theta$ and $\varepsilon$ were set to 0.001 and 0.01 respectively. With this, the feedback control law gain $(k)$ in the lifted space is computed as $k = \hat{Q}^{-1} y$ and the feedback control law in the lifted space is given by $k^\top z$, where $z={\Phi}(x,y)$.

\begin{figure}[htp!]
\centering
\includegraphics[scale=.45]{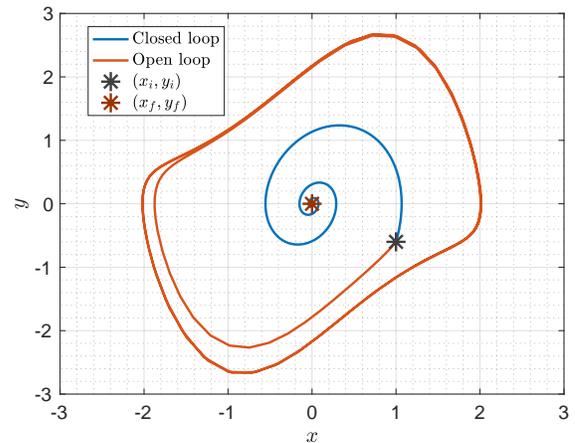}
\caption{Open loop and closed loop trajectory for the Van der Pol oscillator. Without any control the open loop trajectory settles on the stable limit cycle, while with the state feedback control, the closed loop trajectory goes to the origin.}\label{van_der_pol_fig}
\end{figure}

Fig. \ref{van_der_pol_fig} shows the open and closed-loop trajectories for the Van der Pol oscillator starting from a random initial condition. We observe that the trajectory settles to the stable limit cycle attractor without the control input, whereas with the control law, the closed-loop trajectory is stabilized to the origin.

\subsection{Henon Map}

The Henon map is a discrete-time dynamical system which exhibits chaotic behaviour. The equations of motion are
\begin{eqnarray}\label{henon_eq}
\begin{aligned}
& x_{n+1} = 1 - ax_n^2 + y_n\\
& y_{n+1} = bx_n.
\end{aligned}
\end{eqnarray}

For the classical Henon map,  with constants $a = 1.4$ and $b=0.3$, the Henon map has a chaotic attractor, as shown by the invariant measure in Fig. \ref{henon_map_inv_meas_fig}. 
\begin{figure}[htp!]
\centering
\includegraphics[scale=.225]{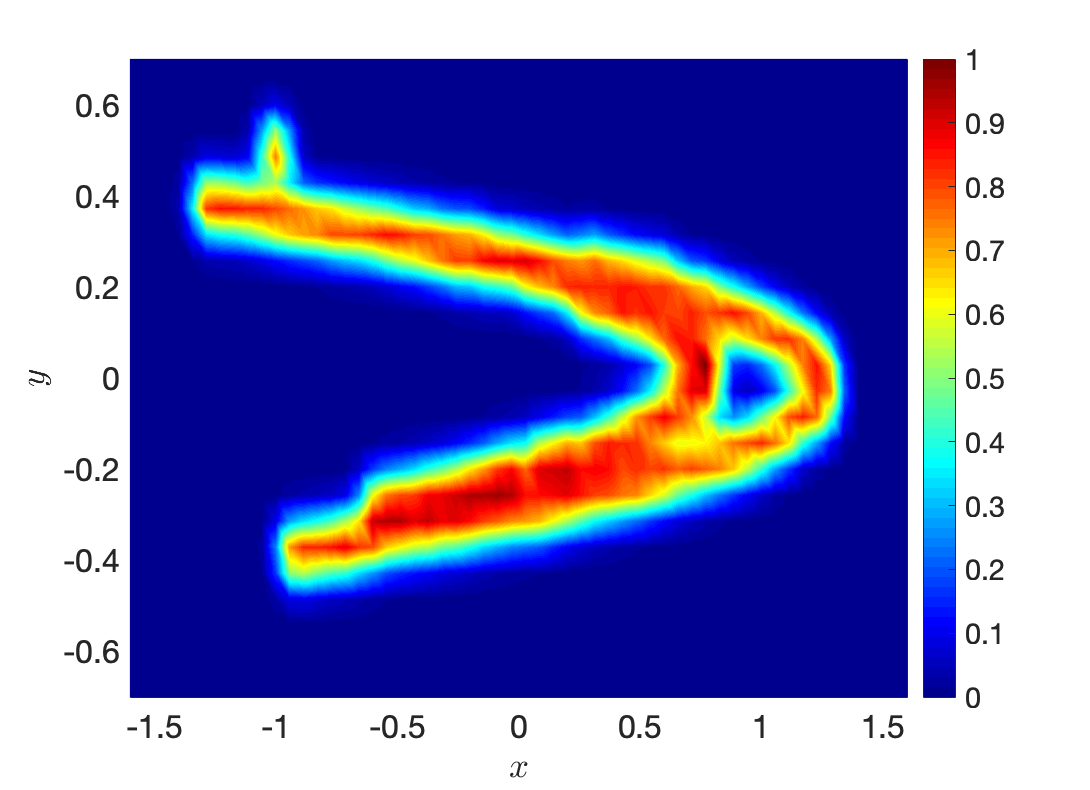}
\caption{Invariant measure of the unforced Henon map.}\label{henon_map_inv_meas_fig}
\end{figure}

We assume that the control map $g(x,y)=[0\quad 1]^\top$ so that the controlled Henon map is given by

\begin{eqnarray}\label{henon_control_eq}
\begin{aligned}
& x_{n+1} = 1 - ax_n^2 + y_n\\
& y_{n+1} = bx_n + u.
\end{aligned}
\end{eqnarray}

\begin{figure}[htp!]
\centering
\includegraphics[scale=.22]{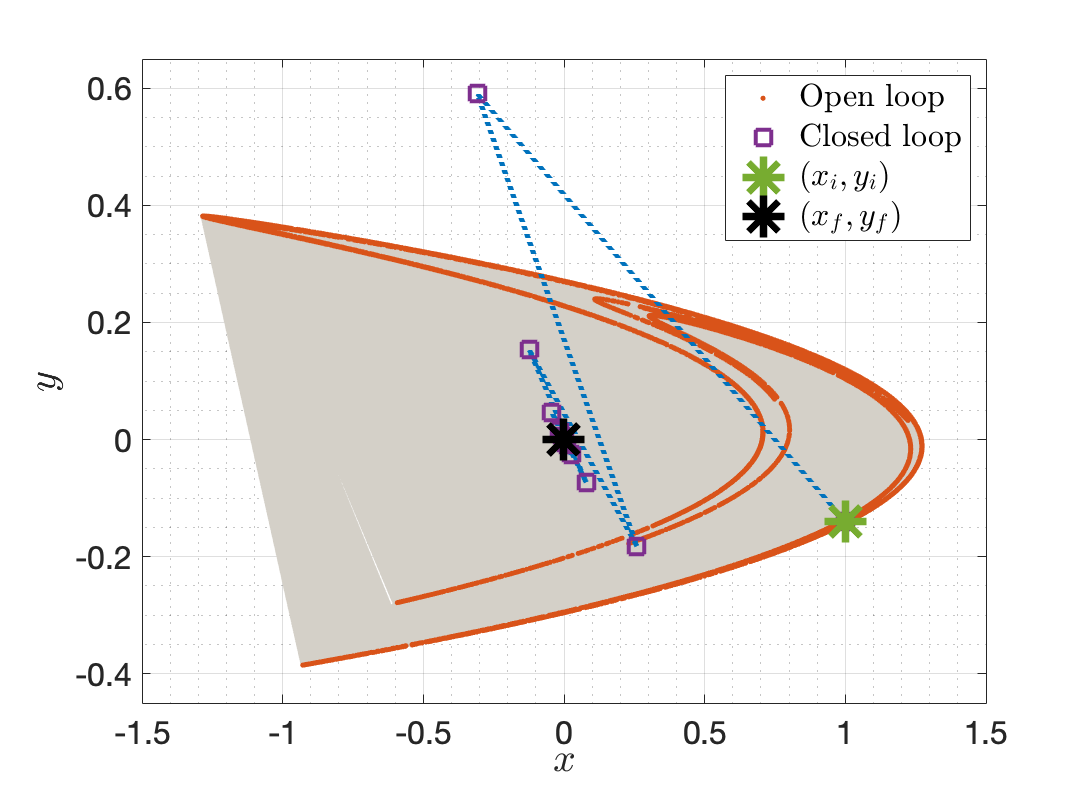}
\caption{Open loop and closed loop trajectory for the Henon map. Without any control the open loop trajectory settles on the chaotic attractor, while with the state feedback control, the closed loop trajectory goes to the origin.}\label{henon_map_fig}
\end{figure}
The data for the uncontrolled Henon map (\ref{henon_eq}) was generated for 10000-time steps and the Koopman operator was computed using monomials up to degree 2. Hence the cardinality of the dictionary function set $\Phi$ is five so that the Koopman operator ${\bf K}\in\mathbb{R}^{5\times 5}$. The optimization problem (\ref{main_opt}) yields the state feedback control law. Subsequently, in Fig. \ref{henon_map_fig}, we show both the open and closed-loop trajectory, starting from a random initial condition. It can be seen that the trajectory for the uncontrolled system settles on the chaotic attractor. In contrast, the trajectory evolves to the origin of the controlled system, thus stabilizing the origin. However, one may note that the closed-loop trajectory is non-smooth, unlike the Van der Pol oscillator, and this is due to the nature of the trajectories of the uncontrolled Henon map.

\section{Conclusions}\label{section_conclusion}

This paper develops a data-driven method for designing a stabilizing control law for general discrete-time control-affine systems. In particular, we use the Koopman operator framework to lift the dynamical system to a higher dimensional space where the evolution is given by a bilinear system. Before the design of the state feedback control law, we analyzed the controllability of the lifted bilinear system and related it to the controllability of the original nonlinear control-affine system. With this, we designed a state feedback stabilizing control law in the higher dimensional space for the bilinear system and proved that this state-feedback law quadratically stabilizes the origin of the bilinear control system in the higher dimensional space. Furthermore, we proposed an optimization problem that maximizes the stabilizability ellipsoid. The proposed approach is demonstrated to stabilize the origin of the Van der Pol oscillator (a nonlinear system with a stable limit cycle) and the Henon map (a chaotic system) from time-series data. 


\bibliographystyle{IEEEtran}
\bibliography{Refs,reference,subhrajit_data_driven_control,subhrajit_gramian}

\begin{thebibliography}{10}
\providecommand{\url}[1]{#1}
\csname url@samestyle\endcsname
\providecommand{\newblock}{\relax}
\providecommand{\bibinfo}[2]{#2}
\providecommand{\BIBentrySTDinterwordspacing}{\spaceskip=0pt\relax}
\providecommand{\BIBentryALTinterwordstretchfactor}{4}
\providecommand{\BIBentryALTinterwordspacing}{\spaceskip=\fontdimen2\font plus
\BIBentryALTinterwordstretchfactor\fontdimen3\font minus
  \fontdimen4\font\relax}
\providecommand{\BIBforeignlanguage}[2]{{%
\expandafter\ifx\csname l@#1\endcsname\relax
\typeout{** WARNING: IEEEtran.bst: No hyphenation pattern has been}%
\typeout{** loaded for the language `#1'. Using the pattern for}%
\typeout{** the default language instead.}%
\else
\language=\csname l@#1\endcsname
\fi
#2}}
\providecommand{\BIBdecl}{\relax}
\BIBdecl

\bibitem{SOS}
D.~Henrion and A.~Garulli, \emph{Positive polynomials in control}.\hskip 1em
  plus 0.5em minus 0.4em\relax Springer Science \& Business Media, 2005, vol.
  312.

\bibitem{parrilo_SOS}
P.~A. Parrilo, ``Structured semidefinite programs and semialgebraic geometry
  methods in robustness and optimization,'' Ph.D. dissertation, California
  Institute of Technology, 2000.

\bibitem{sontag_feedback}
E.~D. Sontag, ``Feedback stabilization of nonlinear systems,'' \emph{Robust
  control of linear systems and nonlinear control}, pp. 61--81, 1990.

\bibitem{Lasota}
A.~Lasota and M.~C. Mackey, \emph{Chaos, Fractals, and Noise: Stochastic
  Aspects of Dynamics}.\hskip 1em plus 0.5em minus 0.4em\relax New York:
  Springer-Verlag, 1994.

\bibitem{Mezic2000}
I.~Mezic and A.~Banaszuk, ``Comparison of systems with complex behavior:
  spectral methods,'' in \emph{Proceedings of the 39th IEEE Conference on
  Decision and Control (Cat. No.00CH37187)}, vol.~2, 2000, pp. 1224--1231
  vol.2.

\bibitem{Vaidya_TAC}
U.~Vaidya and P.~G. Mehta, ``Lyapunov measure for almost everywhere
  stability,'' \emph{IEEE Transactions on Automatic Control}, vol.~53, no.~1,
  pp. 307--323, 2008.

\bibitem{EDMD_williams}
M.~O. Williams, I.~G. Kevrekidis, and C.~W. Rowley, ``{A Data--Driven
  Approximation of the Koopman Operator: Extending Dynamic Mode
  Decomposition},'' \emph{Journal of Nonlinear Science}, vol.~25, no.~6, pp.
  1307--1346, 2015.

\bibitem{sinha_sparse_koopman_acc}
S.~Sinha, U.~Vaidya, and E.~Yeung, ``{On computation of Koopman operator from
  sparse data},'' in \emph{2019 American Control Conference (ACC)}.\hskip 1em
  plus 0.5em minus 0.4em\relax IEEE, 2019, pp. 5519--5524.

\bibitem{mezic_koopmanism}
M.~Budisic, R.~Mohr, and I.~Mezic, ``{Applied Koopmanism},'' \emph{Chaos},
  vol.~22, pp. 047\,510--32, 2012.

\bibitem{sinha_equivariant_IFAC}
S.~Sinha, S.~P. Nandanoori, and E.~Yeung, ``Koopman operator methods for global
  phase space exploration of equivariant dynamical systems,''
  \emph{IFAC-PapersOnLine}, vol.~53, no.~2, pp. 1150--1155, 2020.

\bibitem{sai_global_koopman_acc}
S.~P. Nandanoori, S.~Sinha, and E.~Yeung, ``Data-driven operator theoretic
  methods for global phase space learning,'' in \emph{2020 American Control
  Conference (ACC)}.\hskip 1em plus 0.5em minus 0.4em\relax IEEE, 2020, pp.
  4551--4557.

\bibitem{nandanoori2021data}
------, ``Data-driven operator theoretic methods for phase space learning and
  analysis,'' \emph{arXiv preprint arXiv:2106.15678}, 2021.

\bibitem{sinha_few_shot_arxiv}
S.~Sinha, U.~Vaidya, and E.~Yeung, ``{On Few Shot Learning of Dynamical
  Systems: A Koopman Operator Theoretic Approach},'' \emph{arXiv preprint
  arXiv:2103.04221}, 2021.

\bibitem{sinha_IT_CDC_2016}
S.~Sinha and U.~Vaidya, ``Causality preserving information transfer measure for
  control dynamical system,'' in \emph{2016 IEEE 55th Conference on Decision
  and Control (CDC)}.\hskip 1em plus 0.5em minus 0.4em\relax IEEE, 2016, pp.
  7329--7334.

\bibitem{sinha_IT_ICC}
------, ``On information transfer in discrete dynamical systems,'' in
  \emph{2017 Indian Control Conference (ICC)}.\hskip 1em plus 0.5em minus
  0.4em\relax IEEE, 2017, pp. 303--308.

\bibitem{sinha_IT_power_CDC}
S.~Sinha, P.~Sharma, U.~Vaidya, and V.~Ajjarapu, ``Identifying causal
  interaction in power system: Information-based approach,'' in \emph{2017 IEEE
  56th Annual Conference on Decision and Control (CDC)}.\hskip 1em plus 0.5em
  minus 0.4em\relax IEEE, 2017, pp. 2041--2046.

\bibitem{sinha_IT_power_journal}
------, ``On information transfer-based characterization of power system
  stability,'' \emph{IEEE Transactions on Power Systems}, vol.~34, no.~5, pp.
  3804--3812, 2019.

\bibitem{robust_DMD_journal}
S.~Sinha, B.~Huang, and U.~Vaidya, ``{On robust computation of Koopman operator
  and prediction in random dynamical systems},'' \emph{Journal of Nonlinear
  Science}, pp. 1--34, 2019.

\bibitem{subspace_koompan}
N.~{\v{C}}rnjari{\'c}-{\v{Z}}ic, S.~Ma{\'c}e{\v{s}}i{\'c}, and I.~Mezi{\'c},
  ``Koopman operator spectrum for random dynamical system,'' \emph{arXiv
  preprint arXiv:1711.03146}, 2017.

\bibitem{lusch2018deep}
B.~Lusch, J.~N. Kutz, and S.~L. Brunton, ``Deep learning for universal linear
  embeddings of nonlinear dynamics,'' \emph{Nature communications}, vol.~9,
  no.~1, pp. 1--10, 2018.

\bibitem{yeung2019learning}
E.~Yeung, S.~Kundu, and N.~Hodas, ``{Learning deep neural network
  representations for Koopman operators of nonlinear dynamical systems},'' in
  \emph{2019 American Control Conference (ACC)}.\hskip 1em plus 0.5em minus
  0.4em\relax IEEE, 2019, pp. 4832--4839.

\bibitem{nandanoori2022graph}
S.~P. Nandanoori, S.~Guan, S.~Kundu, S.~Pal, K.~Agarwal, Y.~Wu, and
  S.~Choudhury, ``Graph neural network and koopman models for learning
  networked dynamics: A comparative study on power grid transients
  prediction,'' \emph{IEEE Access}, pp. 1--1, 2022.

\bibitem{surana_cdc_2016}
A.~Surana, ``Koopman operator based observer synthesis for control-affine
  nonlinear systems,'' in \emph{2016 IEEE 55th Conference on Decision and
  Control (CDC)}.\hskip 1em plus 0.5em minus 0.4em\relax IEEE, 2016, pp.
  6492--6499.

\bibitem{optimal_placement_JMAA}
S.~Sinha, U.~Vaidya, and R.~Rajaram, ``Operator theoretic framework for optimal
  placement of sensors and actuators for control of nonequilibrium dynamics,''
  \emph{Journal of Mathematical Analysis and Applications}, vol. 440, no.~2,
  pp. 750--772, 2016.

\bibitem{sinha_IT_optimal_placement_ICC}
S.~Sinha, U.~Vaidya, and E.~Yeung, ``On information transfer in dynamical
  systems with applications in control of non-equilibrium dynamics,'' in
  \emph{2019 Sixth Indian Control Conference (ICC)}.\hskip 1em plus 0.5em minus
  0.4em\relax IEEE, 2019, pp. 326--331.

\bibitem{sinha_IT_data_acc}
S.~Sinha and U.~Vaidya, ``Data-driven approach for inferencing causality and
  network topology,'' in \emph{2018 Annual American Control Conference
  (ACC)}.\hskip 1em plus 0.5em minus 0.4em\relax IEEE, 2018, pp. 436--441.

\bibitem{sinha_IT_data_journal}
------, ``On data-driven computation of information transfer for causal
  inference in discrete-time dynamical systems,'' \emph{Journal of Nonlinear
  Science}, vol.~30, no.~4, pp. 1651--1676, 2020.

\bibitem{susuki2013nonlinear}
Y.~Susuki and I.~Mezi{\'c}, ``{Nonlinear Koopman modes and power system
  stability assessment without models},'' \emph{IEEE Transactions on Power
  Systems}, vol.~29, no.~2, pp. 899--907, 2013.

\bibitem{sinha_computationally_efficient}
S.~Sinha, S.~P. Nandanoori, and E.~Yeung, ``{Computationally Efficient Learning
  of Large Scale Dynamical Systems: A Koopman Theoretic Approach},'' in
  \emph{2020 IEEE International Conference on Communications, Control, and
  Computing Technologies for Smart Grids (SmartGridComm)}.\hskip 1em plus 0.5em
  minus 0.4em\relax IEEE, 2020, pp. 1--6.

\bibitem{sinha_online_PES}
------, ``Data driven online learning of power system dynamics,'' in \emph{2020
  IEEE Power \& Energy Society General Meeting (PESGM)}.\hskip 1em plus 0.5em
  minus 0.4em\relax IEEE, 2020, pp. 1--5.

\bibitem{nandanoori2020model}
S.~P. Nandanoori, S.~Kundu, S.~Pal, K.~Agarwal, and S.~Choudhury,
  ``{Model-agnostic algorithm for real-time attack identification in power grid
  using Koopman modes},'' in \emph{2020 IEEE International Conference on
  Communications, Control, and Computing Technologies for Smart Grids
  (SmartGridComm)}.\hskip 1em plus 0.5em minus 0.4em\relax IEEE, 2020, pp.
  1--6.

\bibitem{sinha_genetic}
A.~Hasnain, S.~Sinha, Y.~Dorfan, A.~E. Borujeni, Y.~Park, P.~Maschhoff,
  U.~Saxena, J.~Urrutia, N.~Gaffney, D.~Becker \emph{et~al.}, ``A data-driven
  method for quantifying the impact of a genetic circuit on its host,'' in
  \emph{2019 IEEE Biomedical Circuits and Systems Conference (BioCAS)}.\hskip
  1em plus 0.5em minus 0.4em\relax IEEE, 2019, pp. 1--4.

\bibitem{kaiser_control}
E.~Kaiser, J.~N. Kutz, and S.~L. Brunton, ``{Data-driven discovery of Koopman
  eigenfunctions for control},'' \emph{arXiv preprint arXiv:1707.01146}, 2017.

\bibitem{proctor_control}
J.~L. Proctor, S.~L. Brunton, and J.~N. Kutz, ``{Generalizing Koopman theory to
  allow for inputs and control},'' \emph{SIAM Journal on Applied Dynamical
  Systems}, vol.~17, no.~1, pp. 909--930, 2018.

\bibitem{korda_mezic_predictor}
M.~Korda and I.~Mezi{\'c}, ``Linear predictors for nonlinear dynamical systems:
  Koopman operator meets model predictive control,'' \emph{arXiv preprint
  arXiv:1611.03537}, 2016.

\bibitem{bowen_feedback}
B.~Huang, X.~Ma, and U.~Vaidya, ``{Feedback stabilization using Koopman
  operator},'' in \emph{2018 IEEE Conference on Decision and Control
  (CDC)}.\hskip 1em plus 0.5em minus 0.4em\relax IEEE, 2018, pp. 6434--6439.

\bibitem{bowen_optimal}
X.~Ma, B.~Huang, and U.~Vaidya, ``{Optimal quadratic regulation of nonlinear
  system using Koopman operator},'' in \emph{2019 American Control Conference
  (ACC)}.\hskip 1em plus 0.5em minus 0.4em\relax IEEE, 2019, pp. 4911--4916.

\bibitem{bowen_convex}
B.~Huang and U.~Vaidya, ``{A Convex Approach to Data-driven Optimal Control via
  Perron-Frobenius and Koopman Operators},'' \emph{arXiv preprint
  arXiv:2010.01742}, 2020.

\bibitem{bullo_book}
F.~Bullo and A.~D. Lewis, \emph{Geometric control of mechanical systems:
  modeling, analysis, and design for simple mechanical control systems}.\hskip
  1em plus 0.5em minus 0.4em\relax Springer, 2019, vol.~49.

\bibitem{discrete_time_controllability}
B.~Jakubczyk and E.~D. Sontag, ``Controllability of nonlinear discrete-time
  systems: A lie-algebraic approach,'' \emph{SIAM Journal on Control and
  Optimization}, vol.~28, no.~1, pp. 1--33, 1990.

\bibitem{artstein_stabilization}
Z.~Artstein, ``Stabilization with relaxed controls,'' \emph{Nonlinear Analysis:
  Theory, Methods \& Applications}, vol.~7, no.~11, pp. 1163--1173, 1983.

\bibitem{astofli_feedback}
A.~Astolfi, ``Feedback stabilization of nonlinear systems,'' \emph{Encyclopedia
  of Systems and Control}, p. 437–447, 2015.

\bibitem{xiushan_stabilization}
C.~Xiushan, W.~Xiaodong, and L.~Ganyun, ``{Stabilization of discrete nonlinear
  systems based on control Lyapunov functions},'' \emph{Journal of Systems
  Engineering and Electronics}, vol.~19, no.~1, pp. 131--133, 2008.

\bibitem{khlebnikov2016quadratic}
M.~V. Khlebnikov, ``Quadratic stabilization of bilinear control systems,''
  \emph{Automation and Remote Control}, vol.~77, no.~6, pp. 980--991, 2016.

\bibitem{petersen_stabilization}
I.~R. Petersen, ``A stabilization algorithm for a class of uncertain linear
  systems,'' \emph{Systems \& Control Letters}, vol.~8, no.~4, pp. 351--357,
  1987.

\end{thebibliography}

\end{document}